\documentclass[11pt,a4]{article}
\usepackage{color}
\usepackage{epsfig}
\usepackage{latexsym}
\usepackage{graphicx} 
\usepackage{subfigure}
\usepackage{amssymb}
\usepackage{url}     
  
\newtheorem{theorem}{Theorem}
\newtheorem{cor}[theorem]{Corollary}

\newtheorem{lemma}[theorem]{Lemma}

\newtheorem{invariant}[theorem]{Invariant}

\newcommand{\cm}[1]{}

\newcommand{\head}{\ensuremath{\mathrm{head}}}
\newcommand{\tail}{\ensuremath{\mathrm{tail}}}

\newenvironment{proof}{\noindent{\bf Proof:}}{
\hspace*{\fill} $\Box$ \vskip \belowdisplayskip}
%% end: one column for long submission

\begin{document}
\title{Egalitarian Graph Orientations} \author{Glencora
  Borradaile\footnote{Supported by NSF CCF-0963921 and NSF
    0852030.}\\Oregon State University \and Jennifer
  Iglesias\footnote{Work done while at Oregon State University, Math
    REU.  Supported by NSF 0852030.}\\Carnegie Mellon University \and Theresa
  Migler\footnote{tmigler@calpoly.edu} \footnotemark[1]\\Oregon
  State University \and Antonio Ochoa\footnotemark[2]\\Cal Poly Pomona
  \and Gordon Wilfong\\Bell Labs \and Lisa Zhang\\Bell Labs}

\maketitle

\begin{abstract}
 Given an undirected graph, one can assign directions to each of the
 edges of the graph, thus {\em orienting} the graph.  To be as
 egalitarian as possible, one may wish to find an orientation such
 that no vertex is unfairly hit with too many arcs directed into it.
 We discuss how this objective arises in problems resulting from
 telecommunications.  We give optimal, polynomial-time algorithms
 for: finding an orientation that minimizes the lexicographic order
 of the indegrees and finding a strongly-connected orientation that
 minimizes the maximum indegree. We show that minimizing
 the lexicographic order of the indegrees is NP-hard when the
 resulting orientation is required to be acyclic.

 {\em keywords: algorithms, graph orientation, routing algorithms}% for stacs
\end{abstract}

\section{Introduction}
\label{s:intro}
We consider problems of orienting the edges of an undirected graph so
that no vertex is unfairly hit with too many arcs directed into it.
We refer to such orientations as {\em egalitarian}: the total
available indegree is shared among the vertices as equally as allowed by the topology of
the graph. This objective arises in various telecommunications
problems. Depending on the requirements of the problem, the
orientation may be unconstrained or need to be strongly connected or
acyclic. We start by describing these motivating applications and
related work.

\paragraph{Unconstrained orientations}
Venkateswaran introduced the problem of directing the edges of an
undirected graph so as to minimize the maximum
indegree~\cite{venkateswaran2004}.  The problem arises from a
telecommunications network design problem in which source-sink pairs
$(s_i,t_i)$ are linked by a directed $s_i$-to-$t_i$ path $c_i$ (called a {\em circuit}).  When
an edge of the network fails, all circuits using that edge fail
and must be rerouted.  For each failed circuit, the responsibility for
finding an alternate path is assigned to either the source or sink
corresponding to that circuit. To limit the rerouting load of any vertex,
it is desirable to minimize the maximum number of circuits for which
any vertex is responsible.  

Venkateswaran  models this problem with an undirected
graph whose vertices are the sources and sinks and whose edges are the
circuits. He assigns the responsibility of
a circuit's potential failure by orienting the edge to either the
source or the sink of this circuit.  Minimizing the maximum number of
circuits for which any vertex is responsible can thus be achieved by finding an orientation that minimizes the maximum
indegree of any vertex.  Venkateswaran shows how to find such an
orientation~\cite{venkateswaran2004}. Asahiro, Miyano, Ono, and Zenmyo give a simpler
analysis~\cite{amoz2006}. Asahiro et al show further that for
any $w\geq \lceil {{{max degree}}\over {2}}\rceil$ the {\sc Path
  Reversal} algorithm minimizes the number of vertices with indegree
at most $w$ and consequently minimizes the number of vertices with
indegree at least $w+1$~\cite{ajmo2016}.

However, there may be multiple orientations that have the same
minimized maximum indegree.  The orientation that has the minimum
number of nodes with maximum indegree is preferable since it minimizes
the number of nodes that have the maximum rerouting load.  Among the
orientations that minimize the number of nodes with maximum indegree,
the one that minimizes the number of second largest indegree is
preferable for the same reason of rerouting load.  Continuing this
reasoning, we can formalize this notion in the following way: given two
orientations $G_A$ and $G_B$, we prefer $G_A$ to $G_B$ if the sequence
of indegrees of $G_A$ (in non-increasing order) is lexicographically
before the sequence of indegrees of $G_B$ (in non-increasing order).
We refer to finding the best orientation with respect to this measure
as the {\em minimum lexicographic} orientation.  In
Section~\ref{sec:min-lex-arb}, we show that a natural greedy algorithm
finds the minimum lexicographic orientation: start with an arbitrary
orientation and repeatedly reverse the orientation of a directed path
while doing so improves the objective.

\paragraph{Strongly-connected orientations} Networks that are used to
route messages should, naturally, be strongly connected: one should be
able to send a packet along a directed path from any vertex to any
other vertex.  A destination-based routing protocol chooses the next
arc along which to send a message based only on the destination of the
message.  Such a protocol can be implemented with an {\em interval
  routing scheme}~\cite{intervalrouting}. An interval routing scheme
for a directed graph is defined by a cyclic numbering of the vertices
and a labeling of each arc with an interval of the vertex
numbers. (More generally, each edge can be labelled with multiple
intervals. We will show that one interval is sufficient and therefore
the best possible.)  For each vertex $u$, the disjoint union of the
intervals labeling the outgoing arcs from $u$ cover all but $u$'s
vertex numbers.  When a packet destined for vertex $v$ reaches a
vertex $u\not = v$ it is forwarded from $u$ along the outgoing arc
from $u$ whose label contains the interval containing $v$'s number.
Such a scheme, in order to be feasible, must be such that a packet
originating at any vertex destined for any other vertex will reach the
destination vertex when routing is done as described above.

In Section~\ref{sec:route-to-min-max}, we show that for any 
strongly connected graph there is an interval routing scheme such that
each outgoing arc is labelled with at most one interval.
This is the most compact routing scheme possible and allows the
routing decision at a given vertex to be made in time proportional to
the outdegree of that vertex.  Thus, to minimize
the routing time at each vertex, we would like to
find a strongly-connected orientation of $G$, the underlying physical network, which minimizes the maximum number of outgoing arcs from any
vertex.  To keep the notation the same between sections of this paper,
we instead minimize the maximum indegree; this is equivalent by way of
reversing all the edges of the graph.  We give an algorithm to find
such an orientation in Section~\ref{sec:min-max-sc}.  We conjecture
that the natural generalization of this algorithm also finds the
minimum lexicographic order of the indegrees of the graph.

\paragraph{Acyclic orientations}
Consider a packet network with input buffers.  A vertex can forward a
packet from its input buffer to the next-hop (the next vertex in the
packet's route) if the input buffer of the next-hop is not already
full.  Such networks can suffer from deadlock.  For example, consider
a ring network in which all input buffers are full: no vertex can
forward a packet to its next-hop because the next-hop's input buffer
is full.  If no packet is allowed to go along certain length-two paths
then deadlock is prevented.  In particular, Wittorff shows how to find
such a collection of forbidden length-two paths by orienting the edges
of the network so that the resulting graph is acyclic with a single
source and making every pair of edges oriented into the same vertex a
forbidden length-two path~\cite{Wittorff:2009:biblatex}.  Then a path
between every pair of vertices avoiding forbidden paths can be found
that avoids any transition from travelling along an arc to travelling
along the reverse of another arc (and hence avoids a pair of edges
that get directed into the same vertex). Minimizing the maximum
indegree minimizes the number of forbidden pairs at any vertex and
hence minimizes the number of routing contraints at any vertex.

In Section~\ref{app:acyclic} we present a simple algorithm to find an
acyclic orientation for the objective of minimizing the maximum
indegree. On the other hand, we also show that minimizing the lexicographic
order of the indegrees is NP-hard when the resulting orientation must
be acyclic.

\subsection{Related work}
\label{s:related}
Asahiro et al.~consider the edge-weighted version of the unconstrained
problem~\cite{amoz2006}.  They build on the work of Venkateswaran and
give a $2-1/k$-approximation algorithm where
$k$ is the maximum weight of any edge in the
graph.  They further show that the weighted version of the
problem is strongly NP-hard even if all edge weights belong to the set
$\{1,k\}$
where $k\geq
2$ is an integer~\cite{ajmoz2011}. Klostermeyer considers the problem of reorienting
edges (rather than whole paths) so as to create graphs with given
properties, such as strongly connected graphs and acyclic graphs
\cite{klostermeyer99}. De Fraysseix and de Mendez show that they can
find an indegree assignment of the vertices given a particular
properties \cite{fm1994}. In our work we are searching for a
particular degree assignment not known a priori.

Biedl, Chan, Ganjali, Hajiaghayi, and Wood give a $13\over
8$-approximation algorithm for finding an ordering of the vertices
such that for each vertex $v$, the neighbors of $v$ are as evenly
distributed to the right and left of $v$ as possible
\cite{bcghtw2005}.  For the purpose of deadlock
prevention~\cite{wimmer78}, Wittorff describes a heuristic for finding
an acyclic orientation that minimizes the sum over all vertices of the
function $\delta (v)$ choose $2$, where $\delta (v)$ is the indegree
of vertex $v$. This obective function is motivated by a
  problem concerned with resolving deadlocks in communications
  networks as described in the previous section~\cite{Wittorff:2009:biblatex}.

\subsection{Notation}

We use basic notation for graph theoretic concepts for graphs $G =
(V,E)$ with $n$ vertices and $m$ edges.  A directed edge or arc, $a$,
is oriented from the vertex $\tail(a)$ to the vertex $\head(a)$.  For
a directed graph, the {\em indegree} of a vertex $v$, denoted $\delta
(v)$, is the number of arcs for which $v$ is the head.  We may use a
subscript to denote the graph with respect to which we measure the
degree.  A {\em directed path} is a sequence of arcs $a_1,a_2,\ldots,a_k$
with $\head(a_i) = \tail(a_{i+1})$ for $1\le i <k$. We add trivial
paths to this definition which are identified by a single vertex. A cycle is a
path such that $\head(a_k) = \tail(a_1)$.  An {\em orientation} of an
undirected graph is an assignment of directions to each edge in the
graph.  A directed graph is {\em strongly connected} if for every pair
of vertices, $u,v\in V$, there are directed paths from $u$ to $v$ and
from $v$ to $u$. A directed graph is {\em acyclic} if there are no
directed cycles in the graph.  For a subset of vertices $X$, $G[X]$ is
the subgraph induced by $X$ and $m(X)$ is the number of edges in
$G[X]$.

\section{Unconstrained orientations}
\label{sec:unconstrained}

We will show that a simple, greedy algorithm, first given by de
Fraysseix and de Mendez~\cite{fm1994}, finds an orientation of an
undirected graph that minimizes the lexicographic order of the
indegrees.  We say that a directed path from $u$ to $v$ is
{\em{reversible}} if $\delta (u) < \delta (v) - 1$.  The greedy
algorithm, given an undirected graph, is:
\begin{tabbing}
  {\sc Path-Reversal}\\
  \qquad \= arbitrarily orient every edge\\
  \qquad \= while there is a reversible path \\
  \> \qquad \= let $P$ be any reversible path whose last vertex
  is of highest indegree\\
  \> \> reverse the orientation of each arc of $P$
\end{tabbing}

This algorithm can be implemented in quadratic time by arguing that
the algorithm proceeds in $k$ phases where $k$ is the maximum indegree
in the initial orientation (below). Therefore, there are at most $m$
iterations of the algorithm and each iteration can be implemented in
linear time using, for example, depth-first search.

Consider any integer $\ell \le k$.  Consider an iteration in which we
reverse a $u$-to-$v$ path where $\delta(v) = \ell$.  Let $Q$ be the
set of vertices of indegree $> \ell$ just before this reversal and let
$Q'$ be the set of vertices that have paths to a vertex in $Q$.
(Note: $Q \subseteq Q'$.)  By definition neither $u$ nor $v$ is in
$Q'$, for otherwise, we would reverse a path ending in a vertex of
indegree $> \ell$.  Further, after this reversal, $Q$ is still the set
of vertices of indegree $> \ell$ and $Q'$ is still the set of vertices
that have paths to a vertex in $Q$.  It follows that there is a
well-defined phase $\ell$, a contiguous subset of iterations that
reverse paths ending in vertices of indegree $\ell$: after reversing a
path ending in a vertex of indegree $\ell$, the algorithm does not
reverse a path ending in a vertex of higher indegree.

\subsection{Minimizing the lexicographic order} \label{sec:min-lex-arb}
{\sc Path-Reversal} finds an orientation that
minimizes the maximum indegree. This observation was made by
Venkateswaran with a rather involved proof~\cite{venkateswaran2004}; a
simpler analysis was given by Ashario et~al.~\cite{amoz2006}.  This
observation is also implied by de Fraysseix and de Mendez,
Lemma~1~\cite{fm1994}.

{\sc Path-Reversal} is more powerful than simply minimizing the
maximum indegree.  We show that the resulting orientation, in fact,
minimizes the lexicographic order of the indegrees.  

We define a {\em cycle reversal} to be the reversal of every edge in a
cycle. Notice that performing a cycle reversal will not change the
number of vertices of any particular indegree.

\begin{lemma}
\label{cycle_reversal}
  Let $O_1$ and $O_2$ be orientations such that $\delta _{O_1} (v) =
  \delta _{O_2} (v)$ for all $v$. Then $O_1$ can be transformed into
  $O_2$ by a sequence of cycle reversals. 
\end{lemma}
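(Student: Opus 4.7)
The plan is to analyze the symmetric difference of the two orientations and show that it decomposes into a collection of directed cycles that can be reversed one at a time. Let $E_{\mathrm{diff}}$ denote the set of edges of $G$ on which $O_1$ and $O_2$ disagree, and let $D$ be the directed graph obtained by orienting every edge of $E_{\mathrm{diff}}$ as it is oriented in $O_1$.

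First I would verify that $D$ is balanced, meaning that every vertex has equal in-degree and out-degree in $D$. Fix a vertex $v$. The edges outside $E_{\mathrm{diff}}$ are oriented identically in $O_1$ and $O_2$ and hence contribute the same amount to $v$'s indegree in both orientations. Since $\delta_{O_1}(v) = \delta_{O_2}(v)$ by hypothesis, the edges of $E_{\mathrm{diff}}$ must also contribute equally. An edge of $E_{\mathrm{diff}}$ pointing into $v$ under $O_1$ points out of $v$ under $O_2$, and vice versa, so the number of $E_{\mathrm{diff}}$-edges entering $v$ in $O_1$ equals the number leaving $v$ in $O_1$. That is exactly $\delta_D(v) = \text{outdeg}_D(v)$.

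Next I would invoke the standard fact that any balanced directed graph decomposes into edge-disjoint directed cycles. The short argument: if $D$ has any edge, then starting from its head and repeatedly following out-going arcs (which exist by the balance condition as long as in-arcs have been consumed) we must eventually revisit a vertex, yielding a directed cycle $C$; remove $C$'s arcs from $D$ and note that $D - C$ is still balanced, so iterate. Call the resulting edge-disjoint directed cycles $C_1, C_2, \ldots, C_t$.

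Finally, starting from $O_1$, I would reverse $C_1, C_2, \ldots, C_t$ in turn. Because the $C_i$ are pairwise edge-disjoint, the edges of $C_i$ are still oriented exactly as they were in $O_1$ at the moment we come to reverse $C_i$, so $C_i$ is genuinely a directed cycle in the current orientation and the reversal is a legal cycle reversal. After all $t$ reversals, every edge in $E_{\mathrm{diff}}$ has been flipped exactly once while every other edge has been left alone, so the resulting orientation is $O_2$. The only step that requires any real work is the balance computation; the cycle decomposition and the edge-disjointness argument are then routine.
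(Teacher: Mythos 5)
Your proposal is correct and follows essentially the same route as the paper: both take the set of edges on which the orientations disagree, observe that equal indegrees force this difference graph to be balanced at every vertex, and then reverse it via a decomposition into closed directed walks (the paper phrases this as reversing an Euler tour, you as reversing edge-disjoint directed cycles). Your version is slightly more explicit about why each reversal is a legal cycle reversal, but the underlying argument is the same.
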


\begin{proof}
  Let $E_R$ be the set of arcs in $O_1$ that have an opposite
  orientation in $O_2$. Notice that for all vertices $v$ in the graph
  induced by $E_R$, the indegree of $v$ equals the outdegree of
  $v$. For otherwise, there is some $v$ such that $\delta _{O_1} (v)
  \neq \delta _{O_2} (v)$. It follows that each connected component of
  the graph induced by these
  edges has an Euler tour~\cite{gy2004}. Reversing these tours gives the
  lemma.
\end{proof}

We define a {\em weak reversal} to be the reversal of a path from a
vertex $u$ to a vertex $v$ where $\delta(u) =\delta(v)-1$. Notice that performing a weak reversal will not change the number of vertices
of any particular indegree. 

\begin{theorem}
  Any orientation that minimizes the lexicographic order of the
  indegrees of the vertices can be transformed into an orientation
  induced by {\sc Path-Reversal} via a sequence of weak reversals or cycle reversals.
\end{theorem}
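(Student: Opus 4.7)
The plan is to iteratively apply cycle and weak reversals to $O$, using as a progress measure the symmetric difference $E^+$ (the set of edges oriented oppositely in $O$ and $O^*$, where $O^*$ denotes the fixed output of {\sc Path-Reversal}), and to terminate when $E^+ = \emptyset$. Throughout, the invariant that the current $O$ remains lex-min (and so, in particular, contains no reversible path) is preserved because both allowed operations preserve the indegree multiset.

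View $E^+$ as a directed subgraph carrying $O$'s orientation, and observe that the excess $f(v) := \delta_{O^*}(v) - \delta_O(v)$ at each vertex equals the net out-flow of $v$ in $E^+$. If $E^+$ contains a directed cycle, reverse it as a cycle reversal, shrinking $E^+$. Otherwise $E^+$ is acyclic and flow-decomposes into source-to-sink paths; pick one such path $P$ from some $u$ with $f(u) > 0$ to some $v$ with $f(v) < 0$.

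The crux of the argument, and what I expect to be the main obstacle, is to show that $P$ is in fact weak-reversible, i.e., that $\delta_O(u) = \delta_O(v) - 1$. The one-sided inequality $\delta_O(u) \ge \delta_O(v) - 1$ follows from $O$ being lex-min and hence having no reversible path. The matching inequality requires exploiting the target orientation: the reverse of $P$ is a $v$-to-$u$ path in $O^*$, and since $O^*$ is a {\sc Path-Reversal} output it too has no reversible path, giving $\delta_{O^*}(v) \ge \delta_{O^*}(u) - 1$. Combining this with $\delta_{O^*}(u) \ge \delta_O(u) + 1$ and $\delta_{O^*}(v) \le \delta_O(v) - 1$ (both from the signs of $f(u)$ and $f(v)$) squeezes the chain to equality throughout, forcing $\delta_O(u) = \delta_O(v) - 1$; reversing $P$ is then a legitimate weak reversal.

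Each step removes its reversed edges from $E^+$ while leaving all other arcs unchanged, so $|E^+|$ strictly decreases and the procedure halts at $O = O^*$. The subtle point is that lex-minimality of $O$ alone only yields the one-sided bound $\delta_O(u) \ge \delta_O(v) - 1$; it is essential that the fixed target $O^*$ also lacks reversible paths in order to upgrade this to equality and thereby guarantee that every $E^+$-path we extract is weak rather than strictly reversible.
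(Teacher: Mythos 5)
Your proof is correct, and it reaches the theorem by a genuinely different route than the paper. The paper's argument is an induction on $S=\sum_v|\delta_{lex}(v)-\delta_{PR}(v)|$: it picks an extremal disagreeing vertex $v$ (maximizing $\delta_{lex}$, then $\delta_{PR}$), considers the set $U$ of vertices that can reach $v$ in the appropriate orientation, and uses a counting argument ($\sum_{u\in U}\delta_{lex}(u)=m(U)\le\sum_{u\in U}\delta_{PR}(u)$) to produce a partner vertex $u$; the absence of reversible paths in $D_{lex}$ then pins $\delta_{lex}(u)=\delta_{lex}(v)-1$ in one case, and the absence of reversible paths in $D_{PR}$ kills the other case outright. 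You instead work directly in the symmetric difference $E^+$, viewed as a flow with divergence $f=\delta_{O^*}-\delta_O$: strip directed cycles as cycle reversals, then take a source-to-sink path of the residual DAG and squeeze $\delta_O(u)=\delta_O(v)-1$ by applying the no-reversible-path property to the path in $O$ \emph{and} to its reverse in $O^*$ simultaneously, together with the signs of $f$ at the endpoints. Your approach buys a few things: it subsumes the paper's Lemma~\ref{cycle_reversal} (the all-cycles case is just your $f\equiv 0$ case rather than a separate base case), it guarantees that every reversed path/cycle lies entirely inside the disagreement set so the progress measure $|E^+|$ is transparent and bounds the number of operations by $m$, and it avoids the extremal vertex choice and the averaging inequality entirely. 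The paper's argument, by contrast, needs only a path in $D_{lex}$ from $u$ to $v$ (not necessarily one disagreeing with $D_{PR}$), which is why it must track the coarser potential $S$ and argue separately that $S$ drops by $2$. One small point worth making explicit in a write-up: the path you extract from the acyclic $E^+$ is automatically simple, so reversing it changes only the indegrees of its endpoints, which is what preserves the indegree multiset and hence the lex-minimality invariant you rely on at every step.
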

\begin{proof}
  Let $D_{lex}$ denote an orientation that minimizes the lexicographic
  order of the indegrees of the vertices, and let $D_{PR}$ denote an
  orientation given by {\sc Path-Reversal}. Let $\delta _{lex} (v)$
  and $\delta _{PR} (v)$ be the indegree of a vertex $v$ in $D_{lex}$
  and $D_{PR}$ respectively. 

We will use induction on $S := \sum _
  {v\in V} |\delta _{lex} (v) -\delta _{PR} (v)|$. If $|S|=0$ then by
  Lemma~\ref{cycle_reversal}, the theorem holds.
  Now suppose that $S>0$. Let $S_{\neq} = \{v:\delta _{lex} (v) \neq
  \delta _{PR} (v)\}$. Let $v$ be a vertex in $S_{\neq}$ that 
  maximizes $\delta _{lex} (v)$ and if there is a choice among many
  such vertices, then maximizes $\delta _{PR} (v)$.
  Then we have the following two cases:
  \begin{enumerate}
  \item $\delta _{lex} (v) > \delta _{PR} (v)$. Let $U$ be the set of
    all vertices that can reach $v$ in $D_{lex}$. Notice that 

    \begin{equation}
      \label{eq:1}
    \sum _{u\in U} \delta _{lex} (u) \leq \sum _{u\in U} \delta _{PR}
    (u)
    \end{equation}

    This is because $\sum _{u\in U} \delta _{lex} (u)$ is the number
    of edges in $G[U]$, $\sum _{u\in U} \delta _{PR} (u)$ also
    includes the indegree from edges in $U$ and may additionally
    include the indegree from edges directed into $U$.
    $\delta _{lex} (v) > \delta _{PR} (v)$ and $v\in U$, so there must
    be some $u\in U$ with $\delta _{lex} (u) < \delta _{PR}
    (u)$.
    Because we chose $v$ to maximize $\delta _{lex} (v)$,
    $\delta _{lex} (u) \leq \delta _{lex} (v)$. Furthermore,
    $\delta _{lex} (u) \neq \delta _{lex} (v)$ because if
    $\delta _{lex} (u) = \delta _{lex} (v)$ then
    $\delta _{PR} (u) > \delta _{lex} (u) = \delta _{lex} (v) >
    \delta_{PR} (v)$,
    but we chose $v$ to maximize $\delta _{PR} (v)$. Therefore
    $\delta _{lex} (u) < \delta _{lex} (v)$. It is not possible for
    $\delta _{lex} (u) < \delta _{lex} (v)-1$ for otherwise reversing
    a $u$ to $v$ path would give an orientation with a smaller
    lexicographic order than $D_{lex}$. Therefore we have that
    $\delta _{lex} (u) = \delta _{lex} (v)-1$ and there is a weakly
    reversible path from $u$ to $v$ in the directed graph defined by
    $D_{lex}$. Reversing this path decreases $S$ by 2.

\item $\delta _{lex} (v) < \delta _{PR} (v)$. Let $U$ be the set of
  vertices that reach $v$ in $D_{PR}$. Notice that $\sum
    _{u\in U} \delta _{lex} (u) \geq \sum _{u\in U} \delta _{PR}
    (u)$. $\delta _{lex} (v) < \delta _{PR} (v)$ and $v\in U$, so
    there must be some $u\in U$ with $\delta _{lex} (u) > \delta _{PR}
    (u)$. We chose $v$ to be
  a vertex that maximizes $\delta _{lex} (v)$, so $\delta _{lex} (u)
  \leq \delta _{lex} (v)$. So we have $\delta _{PR} (u) < \delta
  _{lex} (u) \leq \delta _{lex} (v) < \delta _{PR} (v)$. This means
  that there is a reversible path from $u$ to $v$, a contradiction
  because $D_{PR}$ has no reversible paths.
  \end{enumerate}
\end{proof}

\begin{cor}
\label{opt_lex}
The algorithm {\sc Path-Reversal} finds an orientation that minimizes the
lexicographic order of the indegrees.
\end{cor}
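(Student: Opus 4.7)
The plan is to derive this corollary directly from the preceding theorem by observing that the two operations involved, cycle reversals and weak reversals, are both indegree-sequence preserving. First I would recall the explicit remarks made in the paper: performing a cycle reversal ``will not change the number of vertices of any particular indegree,'' and the same is stated immediately after the definition of a weak reversal. Hence any finite composition of such reversals preserves the multiset $\{\delta(v):v\in V\}$, and therefore preserves the sorted (nonincreasing) indegree sequence.

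Next I would invoke the theorem just proved: starting from any lexicographically minimum orientation $D_{lex}$, there exists a sequence of weak reversals and cycle reversals transforming $D_{lex}$ into an orientation $D_{PR}$ produced by {\sc Path-Reversal}. By the preservation property above, $D_{PR}$ and $D_{lex}$ have identical sorted indegree sequences, and in particular the same lexicographic value. Since $D_{lex}$ is optimal for the lexicographic objective, so is $D_{PR}$.

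I would close by noting that the statement is about \emph{the} algorithm {\sc Path-Reversal}, which is nondeterministic (both in the initial arbitrary orientation and in the choice among reversible paths whose last vertex has maximum indegree). The argument above applies to \emph{every} orientation that can be output by {\sc Path-Reversal}, so every run of the algorithm returns an optimal orientation, not merely some run of it.

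I do not expect any real obstacle here: the theorem does the substantive work, and the corollary amounts to noting that the allowed reversals keep the indegree multiset fixed. The only thing to be careful about is stating the conclusion for all possible executions of the nondeterministic algorithm rather than for a single favorable one.
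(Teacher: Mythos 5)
Your argument is correct and is essentially the paper's intended derivation: the theorem transforms any lexicographically minimal orientation into any {\sc Path-Reversal} output via weak and cycle reversals, both of which the paper explicitly notes preserve the indegree multiset, so the output has the same sorted indegree sequence and is therefore also optimal. Your added remark that this covers every nondeterministic execution is a sensible (and harmless) clarification, since the theorem is stated for an arbitrary {\sc Path-Reversal} orientation.
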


\paragraph{Remarks}  
Let us revisit the motivating problem of failure recovery in network
design, for which a failed circuit notifies either its source vertex
or its sink vertex.  We argued that minimizing the maximum indegree
or the lexicographic order of the indegrees
 minimizes effort in failure recovery.  We could also measure 
the recovery effort  per vertex as
a function $f(\cdot)$ of the number of circuits this vertex is
responsible for. 
The total effort for error recovery is then $\sum_v f(\delta(v))$. 
The shape of $f(\cdot)$, convex or concave or other more complex nature,
can be debated. However, if $f(\cdot)$ is increasing
and strictly convex, we remark that the algorithm {\sc Path-Reversal}
as we have seen also minimizes the total effort. 
Asahiro et al.~\cite{ajmo2012} present a network flow algorithm that
also gives an orientation that minimizes $\sum_v f(\delta^+ (v))$
where $\delta^+ (v)$ denotes the outdegree of a vertex $v$ when $f$ is convex.

\begin{theorem}
\label{opt_convex}
The algorithm {\sc Path-Reversal} finds an orientation $G$ that minimizes
$F(G) = \sum_v f(\delta_G(v))$ for any increasing and strictly convex
function $f$.
\end{theorem}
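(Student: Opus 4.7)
The plan is to use strict convexity to reduce $F$-optimality to the structural characterization of {\sc Path-Reversal}'s output (no reversible path), and then to close the loop via Corollary~\ref{opt_lex}.

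I would start with the key local computation: what is the effect on $F$ of reversing a single reversible $u$-to-$v$ path $P$? Every internal vertex of $P$ has exactly one in-arc and one out-arc of $P$, so its indegree is preserved by the reversal. Only $\delta(u)$ (which increases by $1$) and $\delta(v)$ (which decreases by $1$) change, giving
\[
F(G') - F(G) \;=\; \bigl(f(\delta(u)+1) - f(\delta(u))\bigr) \;-\; \bigl(f(\delta(v)) - f(\delta(v)-1)\bigr).
\]
Reversibility means $\delta(u) \le \delta(v)-2$, and strict convexity of $f$ (equivalently, strict monotonicity of the discrete first difference $x \mapsto f(x+1)-f(x)$) forces the right-hand side to be strictly negative.

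This has two immediate consequences. First, {\sc Path-Reversal} terminates at an orientation $G_{PR}$ with no reversible paths. Second, any orientation $G^*$ minimizing $F$ must also have no reversible paths, for otherwise the computation above would produce an orientation with a strictly smaller value of $F$.

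It now remains to show that $F(G_{PR}) = F(G^*)$. Both orientations have no reversible path, and running {\sc Path-Reversal} with either as the arbitrary initial orientation terminates immediately and returns it unchanged, so both are possible outputs of {\sc Path-Reversal}. Corollary~\ref{opt_lex} then says both minimize the lexicographic order of the indegree sequence. A lexicographically minimum sorted sequence is unique as a sequence, so $G_{PR}$ and $G^*$ have identical sorted indegree sequences, hence identical indegree multisets, and therefore $F(G_{PR}) = \sum_v f(\delta_{G_{PR}}(v)) = \sum_v f(\delta_{G^*}(v)) = F(G^*)$. The main obstacle is conceptual rather than technical: one must recognize that ``no reversible path'' is exactly the condition that the previously established lexicographic result hands us, so that strict convexity alone suffices to promote lex-optimality to $F$-optimality.
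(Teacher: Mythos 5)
Your proposal is correct and follows essentially the same route as the paper: the same local computation showing that reversing a reversible path strictly decreases $F$ by strict convexity of the first differences, combined with Corollary~\ref{opt_lex} and the uniqueness of the lexicographically minimum degree sequence to conclude that the algorithm's output attains the optimal value of $F$. The only (cosmetic) difference is that you observe directly that an $F$-minimizer has no reversible path and is therefore itself a valid output of {\sc Path-Reversal}, whereas the paper runs {\sc Path-Reversal} starting from the $F$-minimizer and derives a contradiction from the first reversal it would perform.
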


\begin{proof}
  Let $\alpha_i(G)$ denote the number of vertices of indegree $i$ in
  $G$. We rewrite the objective to be $F(G) = \sum_i \alpha_i(G)\cdot f(i)$.
  
  Let $G_a$ be an orientation of the graph that minimizes the given
  objective.  Let $G_b$ be the result of the algorithm {\sc Path-Reversal}
  using $G_a$ as the initial orientation.  By Theorem~\ref{opt_lex},
  $G_b$ minimizes the lexicographic order of the indegrees.  Since the
  non-increasing sequence of indegrees that corresponds to such an
  orientation is unique, it follows that any orientation $G_c$ that
  minimizes the lexicographic order of the indegrees satisfies
  $\alpha_i(G_c) = \alpha_i(G_b)$ for all $i$.  It further follows
  that all orientations that minimize the lexicographic order of the
  indegrees achieve the same objective: $F(G_c) = F(G_b)$.

  Suppose for a contradiction to the theorem that the degree
  distributions of $G_b$ and $G_a$ differ.  Therefore, the algorithm
  {\sc Path-Reversal} performs at least one path reversal.  Let $G_1$
  be the graph obtained from $G_a$ after reversing one path, say from
  a vertex $u$ to a vertex $v$. 

  We compare $F(G_a)$ to $F(G_1)$.  Let $\delta_{G_a}(u) = k$
  and $\delta_{G_a}(v) = \ell$. Since this path from $u$ to $v$ was a
  reversible path in $G_a$, $k
  < \ell - 1$.  By the path reversal operation, we get
  $\delta_{G_1}(u) = k+1$ and $\delta_{G_1}(v) = \ell-1$.
  Then:
  \begin{eqnarray*}
    F(G_a)-F(G_1) &=& 
    \sum_x f(\delta_{G_a}(x)) -\sum_x f(\delta_{G_1}(x)) \\
    &=& f(\delta_{G_a}(v))-f(\delta_{G_1}(v))+f(\delta_{G_a}(u))-f(\delta_{G_1}(u)) \\
    &=& \underbrace{f(\ell)-f(\ell-1)}_{A}-\underbrace{\left(f(k+1)-f(k)\right)}_{B} \\
  \end{eqnarray*}
  Since $k < \ell - 1$ and $f$ is increasing and strictly convex, term $A$ is
  greater than term $B$, and so the above difference is positive.  It
  follows that $F(G_a) > F(G_1)$, contradicting the fact that $G_a$ minimizes the objective $F(G)$. Therefore, $G_a$ and $G_1$ must have the same degree distribution.
\end{proof}

\section{Strongly connected orientations} \label{sec:route-to-min-max}

In this section we will show how to find a strongly-connected
orientation that minimizes the maximum indegree. First we argue that
this would enable an interval routing scheme (as described in the
introduction) with minimum table sizes. A routing table for a vertex
$v$ assigns intervals to each outgoing arc that encode how a message
should leave $v$. The size of a table for a given vertex $v$ is the number of intervals
summed over all outgoing arcs from $v$.

\subsection{Minimum routing tables for strongly connected graphs}

It is well known, as a generalization of Whitney's characterizations of 2-edge connected, undirected graphs~\cite{Whitney32} and Robbins' correspondence between strong connectivity and 2-edge connectivity~\cite{Robbins39}, that a directed graph is strongly connected if and only if it has an ear-decomposition.  An {\em ear decomposition} of a directed graph is a partition of the edges into a simple directed cycle $P_0$ and simple directed paths (or cycles) $P_1, \ldots, P_k$ such that for each $i > 0$, the intersection of $P_i$ with $\cup_{j < i} P_j$ are the endpoints of $P_i$ (which may be coincident if $P_i$ is a cycle).  Each $P_i$ is called an ear.

An ear decomposition can be found in linear time~\cite{et1976}.  Given an ear decomposition of a strongly-connected graph, we can define the routing tables using the procedure {\sc Routing} below.  We will define a cyclic ordering $\cal L$ of the vertices.  For each arc $uv$, we will define an interval $[a,b]$, $a,b \in \cal L$.  Recall from the introduction that this information can be used for routing: a message at a vertex $u$ with destination $d$ will be forwarded along $uv$ if $d$ is in $[a,b]$, that is if $d$ is (inclusively) between $a$ and $b$ in the cyclic ordering $\cal L$.  We say that such a labeling is feasible if it allows a message to be routed between any pair of vertices.

We assume, without loss of generality, that each ear in the ear decomposition contains at least two edges: a single-edge ear could be removed while maintaining strong connectivity and so will not be required for routing. We denote the number of edges in $P$ by $|P|$. See Figure~\ref{fig:routing} for a demonstration of this procedure.  It is convenient to use both open and closed endpoints for intervals of $\cal L$.  For example, $(a,b]$ contains all the vertices that are strictly after $a$ and before (or equal to) $b$ in the ordering.  We use $(a,a)$ to represent all the vertices in the cyclic ordering except $a$.  Further, for the purposes of analysis, it is convenient to think of the intervals as being continuous.

\begin{tabbing}
  {\sc Routing} (ear decomposition $P_0 ,P_1,\dots$)\\
  \qquad \= Initialize $\cal L$ to contain all the vertices of $P_0$ in their order around $P_0$.\\
  \> Assign each arc $ab$ of $P_0$ the interval $(a,a)$. \\
  \> For $i = 1, \ldots, k$: \\
  \> \qquad \= Let $v_1$ be the first vertex of $P_i$.\\
  \> \> Let $v_2, \ldots, v_p$ be the second through penultimate vertices of $P_i$. \\
  \> \> Insert $v_2, \ldots, v_p$ into $\cal L$ after $v_1$. \\
  \> \> For $j = 2, \ldots, p$: \\
  \> \> \qquad \= Assign the arc leaving $v_j$ the interval $(v_j,v_j)$.\\
  \> \> Let $v_1u$ be the arc leaving $v_1$ that is assigned the interval $(v_1,a)$ (for some $a$).\\
  \> \> Let $b$ be the vertex after $v_p$ in the cyclic ordering $\cal L$.\\
  \> \> Reassign $v_1u$ the interval $[b,a)$. \\
  \> \> Assign the arc $v_1v_2$ the interval $(v_1,b)$.
\end{tabbing}

The following invariant, among other things, shows that the arc $v_1u$ exists and is unique.
\begin{invariant}
  At any stage in the algorithm, the intervals assigned to the arcs leaving a vertex $v$ are disjoint and form a partition of $(v,v)$.
\end{invariant}

\begin{proof}
  When a vertex is first introduced and there is only one arc leaving it, this invariant is true by construction.

  If we assume for an induction that the invariant holds prior to the introduction of a new arc $v_1v_2$ leaving $v_1$, then there must be exactly one arc whose assigned interval starts with $(v_1,$.  (Also, since a closed endpoint of an interval is never introduced, this arc must be assigned an interval of the form $(v_1,a)$ for some $a$.)  Since, prior to the insertion of $v_2, \ldots, v_p$  into $\cal L$, $b$ is the vertex after $v_1$ in the cycle ordering, $(v_1,b) \subseteq (v_1,a)$ for all $a$.  Therefore after the insertion of $v_2, \ldots, v_p$  into $\cal L$ between $v_1$ and $b$ we still have that $(v_1,b) \subseteq (v_1,a)$.  Since $(v_1,b),[b,a)$ is a partition of $(v_1,a)$, the invariant holds.
\end{proof}

\begin{theorem}
  {\sc Routing} produces a feasible interval routing scheme with each
  arc having exactly one interval.
\end{theorem}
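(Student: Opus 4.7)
The plan has two parts: show each arc carries exactly one interval, and show feasibility. The first is immediate from inspecting {\sc Routing}: arcs of $P_0$ each receive $(a,a)$; in every ear $P_i$, each arc $v_jv_{j+1}$ for $2 \le j \le p$ receives $(v_j,v_j)$, the new arc $v_1v_2$ receives $(v_1,b)$, and the existing arc $v_1u$ has its single interval \emph{replaced} (not augmented) by $[b,a)$. Together with the preceding invariant, this gives every (source, destination) pair a unique next-hop.

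For feasibility I would induct on the number of ears processed. The base case is $P_0$: every arc $ab$ carries $(a,a)$, so a message with destination $d$ walks around the cycle until it reaches $d$. For the inductive step, let ear $P_i = v_1v_2\cdots v_pv_{p+1}$ be added, with $v_1$ and $v_{p+1}$ old and $v_2,\dots,v_p$ new, and split on the (source, destination) types. If the source is a new vertex $v_j$, its only outgoing arc carries $(v_j,v_j)$, so the message walks along the ear, reaching the destination if it lies further along $P_i$ or else emerging at the old vertex $v_{p+1}$, where another sub-case applies. If both source and destination are old, then the only interval modified at an old vertex is $v_1u$'s, shrinking from $(v_1,a)$ to $[b,a)$ by losing only the new vertices $v_2,\dots,v_p$; an old destination therefore has an unchanged next-hop at $v_1$, and the old routing still works by induction.

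The main obstacle is the remaining case: source $s$ old, destination $d = v_k$ new. The key claim I would prove is that at every old vertex $w \neq v_1$, the interval containing $v_k$ in the updated cyclic ordering coincides with the interval that contained $v_1$ beforehand. This rests on a small sub-observation: every interval the algorithm ever creates has one of the shapes $(v,v)$, $(v_1,b)$, or $[b,a)$, so in particular each has an open right endpoint, which forces inter-interval boundaries to sit \emph{at} vertices. Since $v_1$ and $b$ were adjacent in the old cyclic ordering, $v_1$ and $b$ lie in different intervals at $w$ only when the boundary between them is at $b$, in which case the interval holding $v_1$ has right endpoint $b$ exclusive and therefore also contains the inserted $v_2,\dots,v_p$ (they lie strictly before $b$); otherwise $v_1$ and $b$ already lie in the same interval, which again absorbs the inserted vertices. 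Consequently the new next-hop at $w$ for $v_k$ equals the old next-hop at $w$ for $v_1$, so by induction the message travels from $s$ to $v_1$ along the old route for $v_1$; at $v_1$, $v_k \in (v_1, b)$ puts it on $v_1v_2$, and the ear then ferries it along $v_2 \to v_3 \to \cdots \to v_k$.

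Everything apart from the interval-form sub-observation is careful bookkeeping. I would verify the sub-observation by a straightforward simultaneous induction on iterations of {\sc Routing}, checking that the only ways an interval is introduced or modified are the three listed shapes, each with an open right endpoint.
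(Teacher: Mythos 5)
Your proof is correct, and its skeleton --- induction on the ears, base case the cycle $P_0$, then a case analysis on whether source and destination are old or new --- is the same as the paper's. Where you genuinely diverge is in how the one hard case (old source, new destination $v_k$) is discharged. The paper strengthens the inductive hypothesis to a statement about \emph{continuous} half-open intervals: a message destined for anything in $[v, n_{i-1}(v))$ reaches $v$, where $n_{i-1}(v)$ is $v$'s successor in the cyclic order. Since the new vertices are inserted into $[v_1, n_{i-1}(v_1)) = [v_1,b)$, the hypothesis immediately delivers the message to $v_1$, and no further argument about interval boundaries is needed. You instead keep the plain hypothesis (``all pairs in $H_{i-1}$ route correctly'') and compensate with a structural lemma: every interval the algorithm creates --- $(a,a)$, $(v_j,v_j)$, $(v_1,b)$, $[b,a)$ --- has an open right endpoint, so at any old vertex $w \neq v_1$ the partition of $(w,w)$ has its boundaries sitting at vertices, and since $v_1$ and $b$ are adjacent in the old order, the inserted vertices necessarily fall into the same interval as $v_1$; hence the next-hop for $v_k$ at $w$ equals the old next-hop for $v_1$. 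Your boundary argument is sound (the only delicate point, that the interval containing both $v_1$ and $b$ cannot have its complement hiding in the empty gap between them, works because that complement would then contain no old vertex other than possibly $w$ itself, which is excluded). The two devices are interchangeable: the paper's continuous-interval hypothesis is slicker and makes the insertion step nearly automatic, while your interval-shape lemma makes explicit \emph{why} the scheme is insertion-stable, which is arguably more informative about the algorithm's design. One small thing to tighten: your uniqueness claim for the reassigned arc $v_1u$ (that exactly one arc at $v_1$ carries an interval of the form $(v_1,\cdot)$) is exactly what the paper's stated Invariant provides, so you should cite or reprove it rather than treat it as immediate from inspection.
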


Note that this result has been shown previously with a different
approach by Fraigniaud and Gavoille~\cite{fg1994} in Lemma 3.

\begin{proof}
  As mentioned above, for convenience of analysis, we view the intervals as continuous.  Let $H_i = \cup_{j \le i} P_j$ and let ${\cal L}_i$ be the cyclic ordering of the vertices of $H_i$ at the start of iteration $i$ (or end of iteration $i-1$ for $i = k$).  For a vertex $v \in H_i$, let $n_i(v)$ be the vertex immediately after $v$ in ${\cal L}_i$.  We prove the following statement by induction: for $v \in H_i$, at the start of iteration $i$ (or end of iteration $i-1$ for $i = k$), a message with destination in the (continuous) interval $[v, n_i(v))$ will reach vertex $v$.  This statement is true for the base case which corresponds to the interval assignment for $P_0$.

Consider ear $P_i$.  We show that the intervals defined at the end of iteration $i$ allow a message with destination in the interval $[y, n_i(y))$ starting at vertex $x$ will reach vertex $y$ (for $x \ne y$).  The non-trivial cases are Cases~\ref{xv1yinPi},~\ref{xHyP} and~\ref{xv1yH}.
  \begin{enumerate}
  \item $x$ and $y$ are internal vertices of $P_i$ and $y$ is after $x$ along $P_i$:\\\label{xinPiyinPi}
    For every vertex $v$ in $P_i$, a message will get routed on the arc leaving $v$ unless it is destined for $v$ since the interval assigned to the unique arc leaving $v$ contains everything except $v$.
  \item $x = v_1$ and $y$ is an internal vertex of $P_i$:\\\label{xv1yinPi}
    The vertices in $P_i$ are in the interval $[v_2,v_p]$.  By construction and definition of $b$, this is the same as the interval $(v_1,b)$ since $b$ is the vertex after $v_p$ in $\cal L$ and $v_1$ is the vertex before $v_2$ in $\cal L$.   So $[y, n_i(y)) \subset (v_1,b)$ and a message at $v_1$ going to a destination in $[y, n_i(y))$ gets routed along the arc $v_1v_2$.  Correctness follows from Case~\ref{xinPiyinPi}.
  \item $x \in H_{i-1}\setminus v_1$ and $y$ and internal vertex of $P_i$:\\\label{xHyP}
    We argue that the message will reach $v_1$.  By definition $b = n_{i-1}(v_1)$ and by construction $[y,n_i(y)) \subset [v_1, b) = [v_1, n_{i-1}(v_1))$.  Since $v_1 \in H_{i-1}$, by the inductive hypothesis, a message with destination in the interval $[v_1, n_{i-1}(v_1))$ will reach $v_1$; we are done by Case~\ref{xv1yinPi}.
  \item $x = v_1$ and $y \in H_{i-1}$:\\\label{xv1yH} 
    Note that $n_{i-1}(y) = n_i(y)$.  Since $y \ne x$ and $b = n_{i-1}(x)$, $y \in [b,v_1)$.  Therefore $[y,n_i(y)) \cap (v_1,b)$ is empty and the message does not get routed along $v_1v_2$.  Therefore a message in $[y,n_i(y))$ reaches $y$ by the inductive hypothesis.
  \item $x,y \in H_{i-1}$:\\\label{xyH}
    If a message in $[y,n_i(y))$ reaches $v_1$, then the message reaches $y$ by Case~\ref{xv1yH}.  If a message in $[y,n_i(y))$ does not reach $v_1$, then we are done by the inductive hypothesis because $n_{i-1}(y) = n_i(y)$.
    \item $x$ is an internal vertex of $P_i$ and $y \in H_{i-1}$:\\
      Note that $n_{i-1}(y) = n_i(y)$.  Since $v \notin [y,n_i(y))$ for any internal vertex $v$ of $P_i$, a message in $[y,n_i(y))$ will reach $H_{i-1}$.  Then by Case~\ref{xyH}, a message in $[y,n_i(y))$ will reach $y$.
    \item $x$ and $y$ are internal vertices of $P_i$ and $y$ is before $x$ along $P_i$:\\
      Since $v \notin [y,n_i(y))$ for any internal vertex $v$ of $P_i$ after $x$ because $x$ is after $y$ in $P_i$, a message in $[y,n_i(y))$ will reach $H_{i-1}$.  By Cases~\ref{xv1yinPi} and~\ref{xHyP}, a message in $[y,n_i(y))$ will reach $y$.
  \end{enumerate}
\end{proof}

It is non-standard to use open intervals for such a scheme.  Given the final interval assignment and cyclic ordering, numbers can be assigned to the vertices based on the cyclic ordering and the intervals can be closed in the natural way.

\begin{figure}[ht]
  \centering
  \includegraphics[scale=.35]{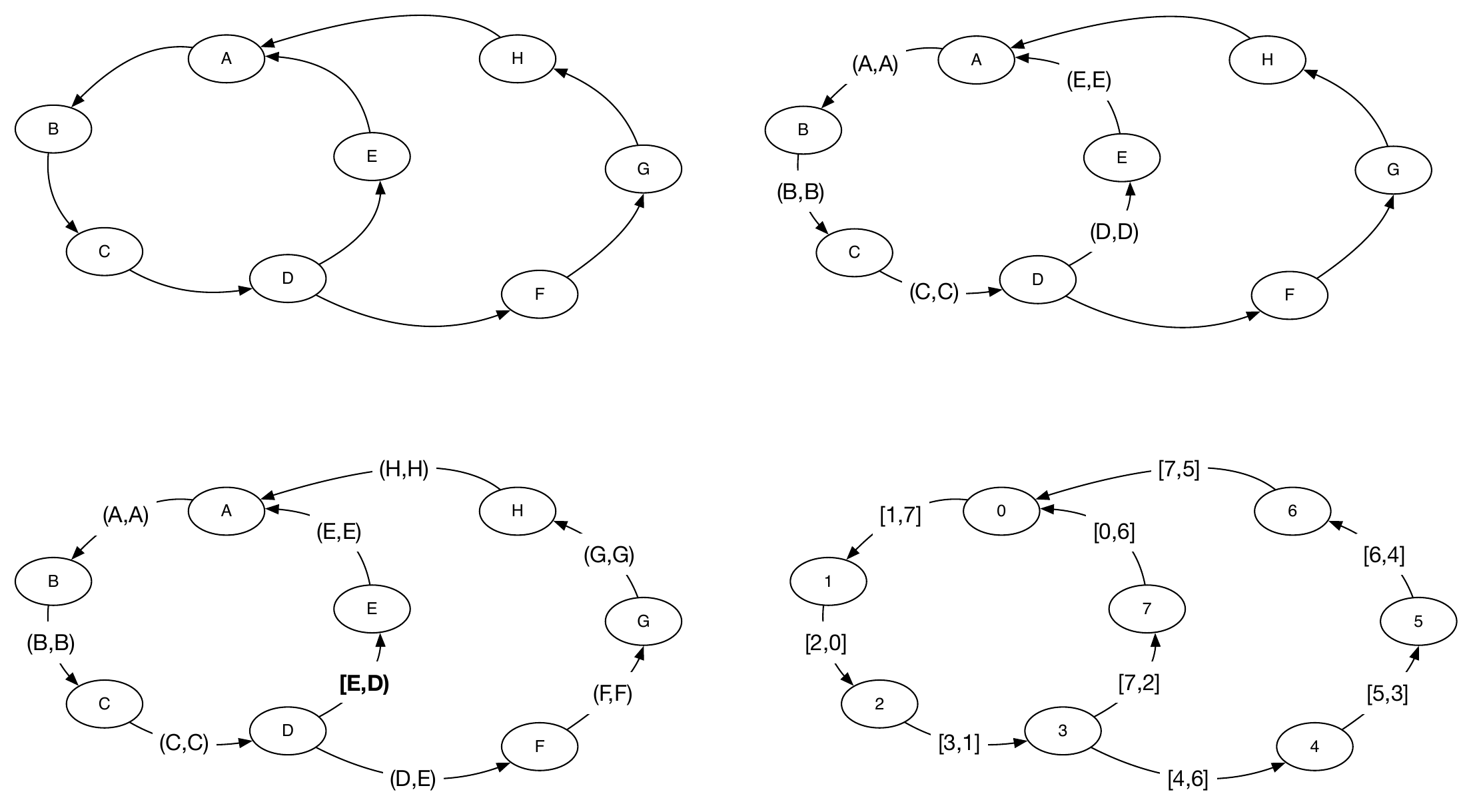}
  \caption{Top left: Input strongly connected component with symbolic node labels and two ears, $P_0 = A, B, C, D, E$ and $P_1 = D, F, G, H, A$.  Top right: Arc labeling of $P_0$ with cyclic ordering ${\cal L} = \{A,B,C,D,E\}$; $(C,C)$ indicates the range of $\cal L$ strictly after $C$ and strictly before $C$ (namely, $D,E,A,B$).  Bottom left: Arc labeling of $P_1$, update of the label for the arc(s) ($DE$) leaving the first node of $P_1$ ($D$), and inserting the internal nodes $F,G,H$ of $P_1$ into the cyclic ordering before the last node of $P_1$; 
${\cal L} = \{A,B,C,D, F, G, H, E\}$;  $[E,D)$ indicates the range of $\cal L$ after and including $E$ and strictly before $D$ (namely, $E, A, B, C$).  Bottom right: conversion to a numerical scheme with closed (cyclic) intervals by mapping the $i^{th}$ element of $\cal L$ to the number $i$.}
  \label{fig:routing}
\end{figure}

Since we can generate an interval routing scheme with exactly one
interval per arc and each arc is required for routing when each ear
has at least two arcs, our labeling is optimal. We can minimize the
table sizes if we can first strongly orient the graph to minimize the
maximum outdegree. To keep with the notation of the rest of the paper,
we instead, without loss of generality, minimize the maximum indegree.

\subsection{Strongly-connected orientations that minimize the maximum
  indegree} \label{sec:min-max-sc}

We will show that a modified version of {\sc Path-Reversal} finds a
strongly-connected orientation of an undirected graph that minimizes
the maximum indegree.  In this section we will assume that the given
directed graph has a strongly-connected orientation. Given a directed
graph, we say that a path from $u$ to $v$ is {\em strongly reversible}
if $\delta (u) < \delta (v) -1$ and reversing the path will maintain
strong connectivity.  The greedy algorithm, given an undirected graph,
is:
\begin{tabbing}
  {\sc SC-Path-Reversal}\\
  \qquad \= start with an arbitrary strongly-connected orientation\\
  \qquad \= while there is a strongly reversible path starting with a 
max-indegree vertex\\
  \> \qquad \= let $P$ be such a path \\
  \> \qquad \= reverse the orientation of each arc of $P$ 
\end{tabbing}
% \begin{figure}[ht]
%   \centering
%   \includegraphics[scale=.7]{indegree_strongly_reversible_path.pdf}
%   \caption{The path from $c$ to $b$ is a strongly reversible path,
%     no path from $a$ to $b$ is strongly reversible.}  
%   \label{fig:strongly_reversible_path}
% \end{figure}

One can find a strongly-connected orientation in linear time using
depth-first search: orient all
edges in the depth-first search tree downward away from the root and orient all the
non-tree edges upward with respect to the tree, cross edges may be
oriented arbitrarily.

Strongly-reversible paths are characterized by the number of edge
disjoint paths between endpoints. We say that a vertex $v$ {\em
  two-reaches} a vertex $u$ if there are two arc-disjoint paths from
$v$ to $u$. We say that a vertex $v$ {\em two-reaches} a vertex set $U$ if
there are paths from $v$ to $u_1$ and from $v$ to $u_2$
where $u_1 ,u_2 \in U$, and these paths are arc disjoint.

In the following we will use network flow theory.  Let $N = (V, E)$ be
a directed network with $s, t \in V$ being the source and the sink of $N$
respectively. The capacity of an edge is a mapping $c : E \to
\mathbb{R}^+$, denoted by $ c_{uv}$. This is the maximum amount of
flow that can pass through an edge. A {\em flow} is a mapping $f : E
\to \mathbb{R}^+$, denoted by $f_{uv}$, subject to the following two
constraints: $f_{uv} \leq c_{uv}$, for each $(u, v) \in E$ and
$\sum_{u:(u, v) \in E} f_{uv} = \sum_{u:(v, u) \in E} f_{vu}$, for
each $v \in V \setminus \{s, t\}$. The value of flow is defined by
$|f| = \sum_{v \in V} f_{sv}$, where $s$ is the source of $N$. An
  $s-t$ {\em cut} $C = (S,T)$ is a partition of $V$ such that $s\in S$ and
$t\in T$. The cut-set of $C$ is the set $\{(u,v)\in E | u\in S, v\in
T\}$. The max-flow, min-cut theorem states that the value of the maximum
flow is equal to the value of the minimum cut~\cite{FF56}.

A consequence of this theorem is that if the maximum flow is greater
than or equal to $k$ in a unit-capacity network, then there are $k$
arc disjoint $s$ to $t$ paths \cite{amo1993}.

\begin{lemma} \label{lem:strong-connect-character} Reversing a
  $u$-to-$v$ path maintains strong connectivity if and only if $u$
  two-reaches $v$.
\end{lemma}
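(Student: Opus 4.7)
The plan is to characterize when reversing $P$ preserves strong connectivity via a flow argument. The key observation driving both directions is the following: if we view $G$ as a unit-capacity network with source $u$ and sink $v$ and route one unit of flow along the arcs of $P$, then the residual network is exactly $G'$, since the arcs of $P$ are saturated (so removed and replaced by their reverses) while all other arcs are unchanged.

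For the forward direction, suppose $G'$ is strongly connected. Then $G'$ contains a $u$-to-$v$ path $Q$, which is precisely an augmenting path in the residual network described above. Hence the maximum $u$-to-$v$ flow in $G$ is at least two, and by the consequence of max-flow min-cut cited just before the lemma there are two arc-disjoint $u$-to-$v$ paths in $G$; that is, $u$ two-reaches $v$.

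For the converse, assume $u$ two-reaches $v$, so the maximum $u$-to-$v$ flow in $G$ is at least two. Then the unit flow along $P$ is not maximum, so its residual $G'$ contains an augmenting path, which is a $u$-to-$v$ path $Q'$ in $G'$. To lift $Q'$ into strong connectivity of $G'$, I will contract $V(P)$ to a single vertex $w$. Because reversing $P$ only changes arcs whose endpoints both lie in $V(P)$, the contractions $G/V(P)$ and $G'/V(P)$ are the same directed multigraph, and since $G$ is strongly connected, so is $G'/V(P)$. Inside $V(P)$, the reversed $P$ is a $v$-to-$u$ path through every vertex of $P$ and $Q'$ is a $u$-to-$v$ path, so any two vertices of $V(P)$ reach each other in $G'$ by routing through $u$ and $v$. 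Combining these, any $x$-to-$y$ path in $G'/V(P)$ lifts to a walk in $G'$ whose visits to $V(P)$ can be stitched together using the internal strong connectivity, giving an $x$-to-$y$ path in $G'$. The main subtlety is this contraction argument; once the locality of the reversal is noticed, the rest is essentially immediate from strong connectivity of $G$ together with the residual-graph correspondence above.
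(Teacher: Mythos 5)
Your proof is correct, and the forward direction is exactly the paper's argument: observing that the reversed graph is the residual network of the unit flow along $P$, so a surviving $u$-to-$v$ path is an augmenting path and max-flow/min-cut yields two arc-disjoint $u$-to-$v$ paths. The converse is where you diverge. The paper argues informally that reversing ``one of these paths'' (i.e.\ one of the two arc-disjoint paths) creates a directed cycle with the other, and that any connectivity that relied on the reversed path can be rerouted around that cycle; as written this implicitly treats $P$ as one of the two witnesses of two-reachability rather than an arbitrary $u$-to-$v$ path. Your version handles the general $P$ cleanly: the residual-graph observation again supplies a $u$-to-$v$ path $Q'$ in $G'$, and the contraction of $V(P)$ isolates exactly what the reversal can break --- since only arcs internal to $V(P)$ change, $G'/V(P)=G/V(P)$ is strongly connected, and the reversed $P$ together with $Q'$ restores mutual reachability inside $V(P)$, so lifted walks stitch together. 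What your approach buys is rigor and generality (it works verbatim for any $u$-to-$v$ path whose arc set supports a unit flow); what the paper's phrasing buys is brevity, at the cost of leaving the rerouting step for arbitrary $P$ to the reader. One small caveat common to both proofs: the residual/flow framing tacitly assumes $P$ does not repeat arcs, which is harmless here since the algorithm only ever reverses simple paths.
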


\begin{proof}
  Suppose that when we reverse a $u$-to-$v$ path $P$, the graph
  remains strongly connected. Thus there must still be a $u$-to-$v$
  path when $P$ is reversed, so in the original graph the max
  $u$-to-$v$ flow must have been at least 2. By the max-flow, min-cut
  theorem, we know that
  there are 2 arc-disjoint paths from $u$ to $v$.

  Now suppose that $u$ two-reaches $v$ in a strongly connected
  orientation. Reversing one of these paths will create a cycle. Any
  pair of vertices requiring one of these paths for connectivity can be
  connected by way of the cycle, which will maintain strong
  connectivity.
\end{proof}

{\sc SC-Path-Reversal} can be implemented to run in quadratic time:
strong-path reversibility can be detected in linear time by two
iterations of the augmenting path algorithm for maximum flow~\cite{FF56}.  There
are a linear number of iterations: we show, as in
Section~\ref{sec:unconstrained}, that after reversing a
strongly-reversible path ending in a vertex of indegree $\ell$, the
algorithm does not reverse a strongly-reversible path ending in a
vertex of higher indegree.

As for the algorithm {\sc Path-Reversal}, we argue that the algorithm
proceeds in $k$ phases where $k$ is the maximum indegree of the
initial strongly connected orientation. In phase $\ell$,
strongly-reversible paths ending in vertices of indegree $\ell$ are
reversed. This reduces the indegree of these vertices by one, and does
not result in any extra vertices of indegree greater than $\ell$.

Let $Q$ be the set of vertices of indegree $> \ell$ just before the
start of phase $\ell$ and let $Q'$ be the set of vertices that have
strongly-reversible paths to a vertex in $Q$.  (Note: $Q \subseteq
Q'$.)  If, in the first iteration of phase $\ell$, a
strongly-reversible path ending in a vertex $v$ of indegree $\ell$ is
reversed, then the indegrees of the vertices in $Q'$ must be $>
\ell$ and $v \notin Q$.  Therefore, after reversing a
strongly-reversible path ending in a vertex of indegree $\ell$, the
algorithm does not reverse a path ending in a vertex of higher
indegree.

\subsubsection{Strongly connected structure}

To prove that {\sc SC-Path-Reversal} minimizes the maximum indegree,
we will use a transitivity-like property of arc-disjointness:

\begin{lemma}
  \label{two_paths}
  Suppose vertices $s$ and $t$ each two-reach a vertex $v$.  If there
  are arc-disjoint $u$-to-$s$ and $u$-to-$t$ paths, then
  $u$ two-reaches $v$.
\end{lemma}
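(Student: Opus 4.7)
The plan is to use the max-flow min-cut theorem (invoked just above the statement) in the unit-capacity orientation and argue that every $u$-to-$v$ cut has capacity at least $2$. Since the existence of two arc-disjoint $u$-to-$v$ paths is equivalent to max-flow at least $2$ from $u$ to $v$, this equivalently means showing that no single arc can separate $u$ from $v$.

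Formally, suppose for contradiction there were a cut $(S,T)$ with $u \in S$, $v \in T$, and only one arc directed from $S$ to $T$. I would split into three cases based on where $s$ and $t$ lie:
\begin{enumerate}
\item If $s \in S$, then since $s$ two-reaches $v \in T$, the two arc-disjoint $s$-to-$v$ paths each must contain an arc from $S$ to $T$, producing at least two such arcs, a contradiction.
\item If $t \in S$, the symmetric argument using the two arc-disjoint $t$-to-$v$ paths yields a contradiction.
\item If $s,t \in T$, then the arc-disjoint $u$-to-$s$ and $u$-to-$t$ paths each cross from $S$ to $T$, again producing at least two arcs from $S$ to $T$.
\end{enumerate}
In every case we obtain two arcs in the cut, contradicting $|C|=1$.

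Hence every $u$-to-$v$ cut has size at least $2$. By max-flow min-cut (in the unit-capacity network), the maximum $u$-to-$v$ flow is at least $2$, and the consequence of the max-flow min-cut theorem stated earlier in the paper then yields two arc-disjoint $u$-to-$v$ paths, i.e.\ $u$ two-reaches $v$.

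The main ``obstacle'' is essentially bookkeeping: one must be careful that in case 3 the two paths $R_1,R_2$ really do contribute two distinct cut arcs (which follows from their arc-disjointness), and that in cases 1 and 2 the location of $t$ (resp.\ $s$) is irrelevant. I do not foresee any substantive difficulty beyond enumerating these cases cleanly.
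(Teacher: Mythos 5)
Your proposal is correct and follows essentially the same argument as the paper: both show that every $u$-to-$v$ cut has capacity at least $2$ by the same case analysis on whether $s$ (or $t$) lies on the source side (use the two arc-disjoint $s$-to-$v$ or $t$-to-$v$ paths) or both lie on the sink side (use the arc-disjoint $u$-to-$s$ and $u$-to-$t$ paths), then invoke max-flow min-cut. The only cosmetic difference is that you phrase it as a contradiction with a size-one cut rather than directly lower-bounding every cut.
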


\begin{proof}
  We argue that the min $u$-$v$ cut is at least 2, proving the lemma by
  the max-flow-min-cut theorem.
  Consider any $u$-$v$ cut (viewed as a bipartition of the vertices),
  $(A,B)$.  If $s \in A$, then the min cut is at least 2 (because the
  min $sv$-cut is at least 2).  Likewise if $t \in A$.  If both $s$
  and $t$ are in $B$, then the min $u$-$v$ cut is at least 2, as
  witnessed by the arc-disjoint $u$-to-$s$ and $u$-to-$t$ paths.
\end{proof}

In order to ensure strong connectivity, we get:

\begin{cor}
\label{sc_simple_lower_bound}
Let $U$ be a set of vertices. For each component $C$ of $G[U]$, there
must be at least one arc from each component of $G[V\setminus C]$ to
$C$.  
\end{cor}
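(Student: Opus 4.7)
The plan is to derive the desired arc directly from the strong connectivity of $G$ by a path-tracing argument. I would fix a component $C$ of $G[U]$ and a component $D$ of $G[V\setminus C]$, and aim to exhibit some arc from $D$ into $C$. To do this, I would pick any vertex $v\in D$ and any vertex $w\in C$, and then use strong connectivity of $G$ to obtain a directed path $P$ from $v$ to $w$ in $G$.

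The key step is then to inspect the first vertex $x$ of $P$ that lies in $C$, and let $u$ be its predecessor on $P$. By the choice of $x$, the prefix of $P$ from $v$ to $u$ is contained in $V\setminus C$, so it is a directed path in $G[V\setminus C]$ connecting $v$ to $u$. This places $u$ in the same component of $G[V\setminus C]$ as $v$, namely $D$, and hence $(u,x)$ is an arc from $D$ into $C$, which is exactly what the corollary asks for.

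I do not anticipate any real obstacle: the argument is a one-shot application of strong connectivity. The only point that requires care is reading the statement correctly, namely that the components are taken in $G[V\setminus C]$ rather than $G[V\setminus U]$; this is precisely what forces the predecessor $u$ of the first arc entering $C$ to land in the same component as the starting vertex $v$, and thus what makes the path-tracing argument produce an arc out of $D$ specifically.
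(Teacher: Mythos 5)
Your argument is correct and is exactly the standard way to make precise what the paper treats as immediate: the corollary is stated without proof as a direct consequence of strong connectivity, and your path-tracing argument (take a directed path from $D$ to $C$ and look at the first arc entering $C$) is the natural verification. You also correctly handle the one delicate point, namely that the prefix of the path stays in $V\setminus C$ and therefore keeps the predecessor $u$ in the same component $D$.
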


We can in fact meet the implied lower bound:
 
\begin{lemma}
  \label{one_edge_sc}
  Let $v$ be a vertex of maximum indegree obtained
  by the {\sc SC-Path-Reversal} algorithm.  Let $U$ be the set of vertices
  that two-reach $v$.  Each component of $G[V\setminus U]$ has {\em
    exactly one} arc to $U$ in the {\sc SC-Path-Reversal} orientation.
\end{lemma}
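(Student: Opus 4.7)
The lower bound (each component has at least one arc to $U$) follows directly from the strong connectivity of the algorithm's orientation via Corollary~\ref{sc_simple_lower_bound}: there are no arcs between distinct components of $G[V\setminus U]$, and strong connectivity demands that every component's outgoing arcs reach $U$.

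For the upper bound, I argue by contradiction. Suppose a component $C$ has two arcs $e_1=(c_1,u_1)$ and $e_2=(c_2,u_2)$ to $U$. The key principle, which I will invoke repeatedly, is a closure property: if any $z\in V\setminus U$ has two arc-disjoint paths in $G$ ending at (not necessarily distinct) vertices of $U$, then combining those paths with the two arc-disjoint $u$-to-$v$ paths guaranteed by $U$-membership and applying Lemma~\ref{two_paths} forces $z\in U$, a contradiction.

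I first deduce structural restrictions. The vertex $c_1$ cannot reach $c_2$ inside $G[V\setminus U]$, for such a path followed by $e_2$ would be arc-disjoint from $e_1$, giving $c_1$ two paths to $U$; by symmetry $c_2 \not\to c_1$ either. So the SCCs $S_1, S_2 \subseteq C$ of $c_1,c_2$ in the SCC DAG of $G[V\setminus U]$ are incomparable. The same argument shows any SCC of $C$ with an arc to $U$ must be a sink in this DAG (otherwise a descendant SCC's vertex enjoys two paths to $U$); conversely, strong connectivity forces every sink SCC to have an arc to $U$. Hence the sinks of $C$'s SCC DAG are exactly the SCCs providing arcs to $U$, and $C$ has at least two such sinks.

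Since $C$ is weakly connected, I then locate an SCC $S_0 \subseteq C$ that is a common DAG ancestor of two sink SCCs $T_1, T_2$ by taking a topologically-earliest SCC along an undirected DAG path between two sinks. From a vertex $z \in S_0$, I build two arc-disjoint $G$-paths to $U$ by following the divergent DAG routes from $S_0$ through disjoint descendant SCCs to $T_1, T_2$ and concatenating with the arcs $f_1, f_2$ leading to $U$; if the two departures from $S_0$ originate at different vertices, I use the strong connectivity of $S_0$ to splice an internal $S_0$-path between them. The closure property then places $z \in U$, contradicting $z \in V\setminus U$.

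The main obstacle will be rigorously extracting the common-ancestor SCC $S_0$ from the weakly-connected DAG-with-multiple-sinks structure and verifying arc-disjointness of the constructed $G$-paths when their departures from $S_0$ differ. Both points hinge on the strong connectivity of $S_0$ providing flexible internal routing, together with the fact that the DAG routes to $T_1$ and $T_2$ pass through non-overlapping descendant SCCs after divergence.
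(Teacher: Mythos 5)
Your reduction to the SCC structure of a component $C$ is sound up to the point where you need a ``branching'' SCC: the closure principle is a correct use of Lemma~\ref{two_paths}, and the claims that the sinks of $C$'s SCC DAG are exactly the SCCs with arcs to $U$ and that there are at least two of them are correct (modulo a slip: in the non-sink case it is the vertex \emph{in} the non-sink SCC that acquires two arc-disjoint paths to $U$ --- its direct arc plus a route through a descendant --- not ``a descendant SCC's vertex''). The genuine gap is the step you yourself flag as the ``main obstacle,'' and flagging it does not discharge it. First, the topologically-earliest SCC on an undirected path between two sinks $T_1,T_2$ is \emph{not} in general a common ancestor of $T_1$ and $T_2$: in the DAG with arcs $a\to b$, $c\to d$, $a\to e$, $c\to e$ (sinks $b,d,e$), the path $b,a,e,c,d$ has earliest vertex $a$ (or $c$), and $a$ is not an ancestor of $d$. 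That construction only yields a local source of the path, i.e.\ an SCC with two distinct out-arcs, whose two branches need not end at distinct sinks nor remain disjoint. Second, even granting a common ancestor $S_0$ of two distinct sinks, the DAG routes from $S_0$ to $T_1$ and to $T_2$ do not automatically ``pass through non-overlapping descendant SCCs after divergence''; they can remerge and share arcs. To repair this one must take the branch point to be the \emph{last} SCC common to the two routes (in a DAG this does force the two suffixes to share no SCC) and start both walks at the tail of the out-arc used by one route, invoking strong connectivity of that SCC only for the other walk. Neither the existence of a suitable pair of sinks with a common ancestor nor the disjointness of the suffixes is actually established in what you wrote.

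The paper's proof avoids this machinery entirely and is worth comparing against. It lets $v_1,\dots,v_p$ be the tails in $C$ of the arcs into $U$, sets $C_i=\{x\in C: x\mbox{ reaches }v_i\}$, argues that the $C_i$ are pairwise disjoint (a vertex in two of them two-reaches $U$, so it would belong to $U$) and cover $C$, and then uses the connectivity of $C$: an edge between $C_i$ and $C_j$, however oriented, places its tail in both classes, so the partition has a single class and $p=1$. That connectivity-merges-the-parts step is precisely the ingredient your argument is missing --- it is what guarantees that two reachability regions meet, which is where the second arc-disjoint path ultimately comes from. If you want to salvage your SCC version, the same partition argument applied to the sets of SCCs reaching each sink proves that two distinct sinks share a common ancestor, and the last-common-SCC choice fixes the disjointness; but at that point you have reconstructed the paper's argument with considerable extra overhead.
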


\begin{proof}
  Let $C$ be a component of $G[V\backslash U]$ and suppose for a
  contradiction that there are multiple arcs from $C$ to $U$.  Let
  $v_1,v_2,v_3,\dots , v_p$ be vertices in $C$ that are tails of
  these arcs.  Let $C_i$ be the set of vertices in $C$ that reach
  $v_i$. We will argue that all of the $C_i$s are in fact the same, so
  there is only one $v_i$ that is the tail of an arc from $C$ into $U$.

  Since the graph is strongly connected, every vertex in $C$ reaches $U$
  and so must reach some $v_i$.  If $x \in C_i \cap C_j$ for some $i
  \neq j$, then by Lemma~\ref{two_paths}, $x$ two-reaches $U$,
  contradicting the definition of $U$.  Therefore $C_1,\ldots,C_p$ is
  a partition of $C$.  However, since $C$ is connected, there must be
  an arc $uv$ from, say, $C_i$ to $C_j$.  In which case, $u \in C_j$,
  a contradiction by the above case.  Therefore, the partition cannot
  contain more than one set. So there is only one arc from $C$ to $U$.
\end{proof}

\subsubsection{Minimizing the maximum indegree}

We show that the algorithm minimizes the maximum indegree by meeting
the following lower bound.  For a set of vertices $U$, let $c(U)$ be
the number of components of $G[V\backslash U]$.

\begin{lemma}
\label{sc_lower_bound}
The maximum indegree of any strongly connected orientation is at least
\[\max_{U \subseteq V} \left\lceil {m(U) + c(U) \over |U|} \right\rceil.\]
\end{lemma}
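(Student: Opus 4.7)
The plan is to establish, for any fixed subset $U \subseteq V$, a lower bound on the sum of indegrees of vertices in $U$, and then apply pigeonhole to extract a lower bound on the maximum indegree.

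First I would decompose the total indegree of $U$ into two parts:
\[
\sum_{v \in U} \delta(v) \;=\; (\text{arcs inside } G[U]) \;+\; (\text{arcs from } V\setminus U \text{ directed into } U).
\]
The first term is exactly $m(U)$, since every edge of $G[U]$ has both endpoints in $U$ and so contributes $1$ to the indegree sum regardless of its orientation.

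The core step is lower-bounding the second term by $c(U)$. Let $C_1,\ldots,C_{c(U)}$ be the components of $G[V\setminus U]$. The key observation is that, by the definition of connected components in $G[V\setminus U]$, there are \emph{no} edges of $G$ between two distinct components $C_i$ and $C_j$; so every edge of $G$ incident to $C_i$ either lies inside $C_i$ or joins $C_i$ to $U$. Since the orientation is strongly connected, there must exist a directed path from some vertex of $C_i$ to some vertex of $U$, and the first arc of this path that leaves $C_i$ must by the previous observation go directly into $U$. Hence each component $C_i$ contributes at least one arc directed from $V\setminus U$ into $U$, and arcs coming from different $C_i$ are distinct, yielding at least $c(U)$ such arcs in total. (This is essentially one half of Corollary~\ref{sc_simple_lower_bound}, restricted to the case where we only need paths from $V\setminus U$ back into $U$.)

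Combining, $\sum_{v \in U}\delta(v) \geq m(U)+c(U)$, so by averaging there is some $v\in U$ with
\[
\delta(v) \;\geq\; \left\lceil \frac{m(U)+c(U)}{|U|} \right\rceil.
\]
Taking the maximum over all subsets $U$ proves the lemma. I do not expect any real obstacle: the only subtle point is the no-edges-between-components observation that forces each component of $G[V\setminus U]$ to send an arc directly into $U$ rather than into another component, and this is immediate from the definition of components in $G[V\setminus U]$.
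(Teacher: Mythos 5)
Your proof is correct and follows essentially the same route as the paper: the total indegree over $U$ is at least $m(U)$ from internal edges plus $c(U)$ from arcs forced into $U$ by strong connectivity (the paper cites its Corollary~\ref{sc_simple_lower_bound} for this second term, which you prove directly via the observation that a path leaving a component of $G[V\setminus U]$ must first enter $U$), followed by averaging. No gaps.
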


\begin{proof}
  The total indegree that must be shared amongst $U$ is at least the
  number of edges in $G[U] + c(U)$, where the second term follows from
  Corollary~\ref{sc_simple_lower_bound}.  By an averaging argument at
  least one vertex must have indegree at least $\left\lceil {m(U) +
      c(U) \over |U|} \right\rceil$.
\end{proof}

\begin{theorem}
  \label{sc_min_max}
  The algorithm {\sc SC-Path-Reversal} finds a strongly connected
  orientation that minimizes the maximum indegree.
\end{theorem}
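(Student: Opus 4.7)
The plan is to show that the maximum indegree $k$ achieved by {\sc SC-Path-Reversal} matches the lower bound of Lemma~\ref{sc_lower_bound} for a well-chosen set $U$. Let $v$ be a vertex of maximum indegree in the final orientation, so $\delta(v) = k$, and let $U$ be the set of vertices that two-reach $v$; note $v \in U$.

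First I would compute the total indegree of $U$. Every arc with head in $U$ is either an arc of $G[U]$ or an arc entering $U$ from $V \setminus U$. By Lemma~\ref{one_edge_sc}, each of the $c(U)$ components of $G[V\setminus U]$ contributes \emph{exactly one} arc into $U$, so the boundary contribution is precisely $c(U)$. Hence
\[
\sum_{u \in U} \delta(u) \;=\; m(U) + c(U).
\]

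Next I would use the termination condition to show that every $u \in U$ has $\delta(u) \geq k-1$. By Lemma~\ref{lem:strong-connect-character}, since $u$ two-reaches $v$, reversing any $u$-to-$v$ path preserves strong connectivity. Therefore, if $\delta(u) < k-1$ for some $u \in U$, the $u$-to-$v$ path would be strongly reversible and would end at a max-indegree vertex, contradicting termination of {\sc SC-Path-Reversal}.

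Combining these and noting that $v \in U$ contributes $k$ while the other $|U|-1$ vertices each contribute at least $k-1$,
\[
m(U) + c(U) \;=\; \sum_{u \in U} \delta(u) \;\geq\; k + (k-1)(|U|-1) \;=\; (k-1)|U| + 1,
\]
so $(m(U)+c(U))/|U| > k-1$ and therefore $\lceil (m(U)+c(U))/|U|\rceil \geq k$. Lemma~\ref{sc_lower_bound} then says that every strongly-connected orientation has maximum indegree at least $k$, so the orientation returned by {\sc SC-Path-Reversal} is optimal. The main obstacle is pinning down the structural description of $U$: we need both Lemma~\ref{one_edge_sc} (to count the arcs entering $U$ exactly, rather than only bounding them) and Lemma~\ref{lem:strong-connect-character} (to translate termination into the lower bound $\delta(u) \geq k-1$ for all $u \in U$). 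Once these are in hand, a one-line averaging argument closes the gap.
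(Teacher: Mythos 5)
Your proposal is correct and follows essentially the same route as the paper: take $U$ to be the set of vertices that two-reach a maximum-indegree vertex $v$, use Lemma~\ref{one_edge_sc} to get $\sum_{u\in U}\delta(u)=m(U)+c(U)$, use the termination condition (via Lemma~\ref{lem:strong-connect-character}) to get $\delta(u)\ge k-1$ on $U$, and conclude $(m(U)+c(U))/|U|>k-1$ so that Lemma~\ref{sc_lower_bound} certifies optimality. Your write-up is if anything slightly more explicit than the paper's in justifying why a vertex of $U$ with indegree below $k-1$ would yield a strongly reversible path.
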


\begin{proof}
  Let $k$ be the maximum indegree resulting from {\sc SC-Path-Reversal}.
  Let $v$ be a vertex of indegree $k$.  Let $U$ be the set of vertices
  that two-reach $v$.  By the termination criteria of the algorithm,
  all vertices in $U$ have indegree $k$ or $k-1$.  By
  Lemma~\ref{one_edge_sc}, the total indegree shared amongst $U$ is
  $m(U)+c(U)$.  We have that $|U| k\geq m(U)+c(U) > |U|(k-1)$.
  Dividing by $|U|$ yields $k\geq \frac{m(U)+c(U)}{|U|} > (k-1)$.  
  % $$k= \left\lceil \frac{m(U)+c(U)}{|U|}  \right\rceil $$
  By Lemma~\ref{sc_lower_bound}, this is the best possible.
\end{proof}

Theorem~\ref{sc_min_max} has previously been proven in a
non-constructive manner by Frank~\cite{frank1980, frank2011}.

We conjecture that {\sc SC-Path-Reversal} is indeed optimal for the
``minimizing the lexicographic order'' objective as
well. Unfortunately our proof technique from
Section~\ref{sec:unconstrained} for minimizing the lexicographic order
of an arbitrary orientation does not follow through. For example, we
would need to consider the set of vertices $U$ that have at least two
paths to a vertex of highest indegree, but there could be a vertex $x$
on a path from $u\in U$ to $v$ that is not in $U$. In this case
inequality (\ref{eq:1}) does not hold. For this, and other reasons, a
different technique will needed to obtain this result.

\section{Acyclic orientations} 
\label{app:acyclic}

We now examine the situation in which the resulting orientation needs
to be acyclic. Unlike what we have seen, minimizing the lexicographic
order is no longer polynomially solvable.  However, a simple algorithm
guarantees optimality of minimizing the maximum indegree.

\subsection{Minimizing the maximum indegree}
The following simple procedure minimizes the maximum indegree
for an acyclic orientation.

\noindent {\bf{Stripping Procedure}}
Choose a vertex with minimum degree. Orient all incident edges into that
vertex. {\em Remove} that vertex and its adjacent edges. Repeat.

\begin{theorem}
  Stripping finds an acyclic orientation with maximum indegree
  minimized.
\end{theorem}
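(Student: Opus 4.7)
My plan is to show that the Stripping Procedure always produces an orientation whose maximum indegree equals the \emph{degeneracy} of $G$, namely the quantity $d(G) := \max_{S \subseteq V} \delta_{\min}(G[S])$, where $\delta_{\min}(H)$ denotes the minimum degree of $H$. I will then show that $d(G)$ is a lower bound on the maximum indegree of any acyclic orientation. Together these two facts prove optimality.

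First I would verify that the orientation produced is acyclic. Let $v_1, v_2, \dots, v_n$ be the order in which vertices are removed. For any edge $v_i v_j$ with $i < j$, the endpoint $v_j$ is still present when $v_i$ is processed, and so the edge is oriented into $v_i$ at that step. Thus every arc points from a later-removed vertex to an earlier-removed one, so reading the vertices in reverse removal order gives a topological ordering, which rules out directed cycles.

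Next I would prove the lower bound $\OPT \geq d(G)$. Fix any acyclic orientation of $G$ and any subset $S \subseteq V$. The restriction of the orientation to $G[S]$ is also acyclic, so $G[S]$ has at least one sink $v$, i.e.\ a vertex whose outdegree within $G[S]$ is zero. Every edge of $G[S]$ incident to $v$ is then directed into $v$, so the indegree of $v$ in the full oriented graph $G$ satisfies $\delta(v) \geq \deg_{G[S]}(v) \geq \delta_{\min}(G[S])$. Maximizing over $S$ yields $\OPT \geq d(G)$.

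Finally I would show the Stripping Procedure achieves $d(G)$. Let $H_i$ be the graph at the start of iteration $i$ and $S_i$ its vertex set, so $H_i = G[S_i]$ because we only delete vertices and their incident edges. The chosen vertex $v_i$ is a minimum-degree vertex of $H_i$, and all edges of $H_i$ incident to $v_i$ are oriented into $v_i$, so its final indegree equals $\deg_{H_i}(v_i) = \delta_{\min}(G[S_i]) \leq d(G)$. Hence the maximum indegree produced is at most $d(G)$, matching the lower bound. The only nontrivial step is the sink argument for the lower bound; the remainder is bookkeeping about the order of removal.
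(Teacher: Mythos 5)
Your proof is correct and is essentially the paper's argument: the crucial step in both is that any acyclic orientation restricted to an induced subgraph has a sink, whose indegree is at least the minimum degree of that subgraph. You merely package this through the degeneracy $\max_{S} \delta_{\min}(G[S])$ and prove both inequalities against it, whereas the paper applies the sink argument directly to the vertex set remaining when the worst vertex is stripped; the content is the same.
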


\begin{proof}
  Let $k$ be the maximum indegree resulting from stripping and let $v$
  be a vertex with indegree $k$. Thus at some iteration, $v$ had the
  minimum degree among the remaining vertices, $U$. Let $H$ be the
  subgraph induced by $U$ with orientation inherited from an optimal
  orientation of the original graph. $H$ must have a sink and since
  every vertex in $H$ has degree at least $k$, this vertex must have
  indegree at least $k$.
\end{proof}

\subsection{Acyclic minimum lexicographic orientation is NP-hard}
We will show that the problem of minimizing the lexicographic order of
an acyclic orientation is NP-hard. We will give a reduction from set
cover to the related problem of finding an acyclic orientation such
that:
\begin{itemize}
\item the maximum indgree is minimized 
\item the number of nodes with this
  maximum indegree is also minimized
\end{itemize}
Clearly, finding an acyclic orientation with minimum lexicographic
order of indgrees solves this related problem. The decision version of
this related problem ``Is there an acyclic orientation that minimizes
the maximum inegree and further has at most $\ell$ vertices with that
maximum indegree?''  is in NP because the orientation is the
certificate.

The {\em Set Cover} problem is defined as follows: Given a set of
elements $\{1,2,\dots , m\}$ (called the universe) and $n$ sets whose
union comprises the universe, the set cover problem is to identify the
smallest number of sets whose union contains all elements in the
universe. Set cover is proven to be NP-hard by a reduction from the
vertex cover problem~\cite{gj1979}.

We say that a set of vertices is {\em $t$-strippable} if the stripping procedure
results in maximum indegree at most $t$ among these vertices.

We will use the following {\em $k$-gadget graph} $H_\ell$ with $1\leq \ell <k$ and $k$ odd:

  \begin{figure}[ht]
    \centering

    \subfigure[$k=5$, $\ell= 3$]{
    \includegraphics[scale=.57]{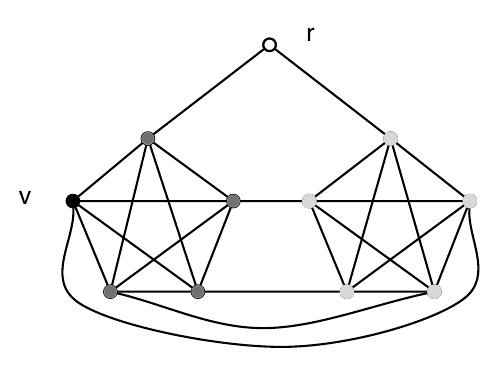}
    \label{fig:gadgets1}
  } 
  \subfigure[$k=5$, $\ell =2$ and $v\neq s$]{
    \includegraphics[scale=.57]{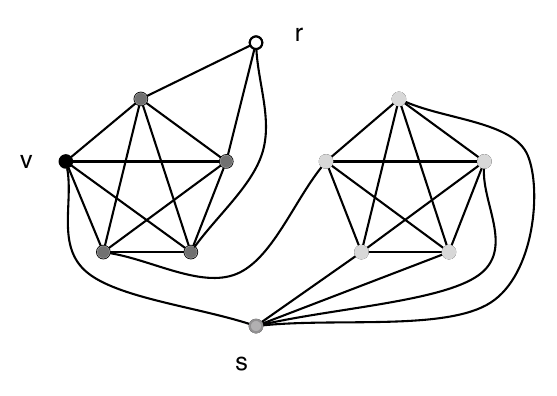}
    \label{fig:gadgets2}
  }
  \subfigure[$k=5$, $\ell =2$ and $v= s$]{
    \includegraphics[scale=.57]{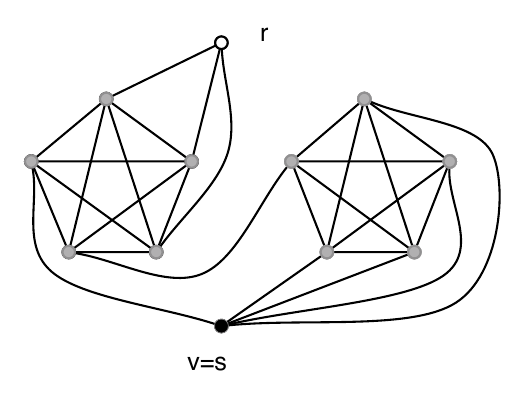}
    \label{fig:gadgets3}
  }

  \caption[Optional caption for list of figures]{Above are examples
    for the gadgets when $k=5$ and in the case of Figure~\ref{fig:gadgets1} $\ell
    = 3$ or in Figures~\ref{fig:gadgets2}~and~\ref{fig:gadgets3} $\ell
    =2$. The shading denotes the stripping order as used in
    Lemma~\ref{set_gadget}, we start by stripping the darkest vertex
    and lastly strip the lightest vertex.}
  \label{fig:subfigureExample}
  \end{figure}

\begin{description}
\item[If $\ell$ is odd] $H_{\ell}$ is composed of 2 copies of $K_k$, a
  clique on $k$ vertices, and a root vertex $r$. Connect $r$ to
  $(k-\ell)/2$ of the vertices in each of the complete graphs. Add a
  matching between the vertices of degree $k-1$ of the $K_k$
  subgraphs. (Figure~\ref{fig:gadgets1})

\item[If $\ell$ is even] $H_{\ell}$ is composed of a left and right copy of $K_k$,
  a root vertex $r$, and an extra vertex $s$. Connect $r$ to $k-\ell$
  vertices in the left $K_k$. Connect $s$ to $\ell/2$ vertices
  in the left $K_k$. Connect $s$ to $k-\ell/2$ of the vertices in the
  right $K_k$. Add a matching between the vertices of degree $k-1$ of the left and right $K_k$
  subgraphs. (Figures~\ref{fig:gadgets2} and ~\ref{fig:gadgets3})
\end{description}

It is easy to verify that $H_{\ell}$ has the following properties:

\begin{enumerate}
\item All the vertices, except the root vertex $r$, have degree $k$.
\item The root vertex $r$ has degree $k-\ell$.
\item $H_\ell \backslash \{r\}$ is connected.
\item $H_\ell$ is $(k-1)$-strippable.
\end{enumerate}

Let $\mathcal{S} =\{ S_1, S_2, \dots S_m \}$ be the instance of set
cover. We wish to find $\ell$ sets that cover all of the elements. Let
$f_x$ denote the frequency of element $x$ in $\mathcal S$. Let $k$ be
the smallest odd number which is greater than $\textrm{max}_{i,x} \{|S_i|,f_x\}$.

 \begin{figure}[ht]
    \centering
    \includegraphics[scale=.4]{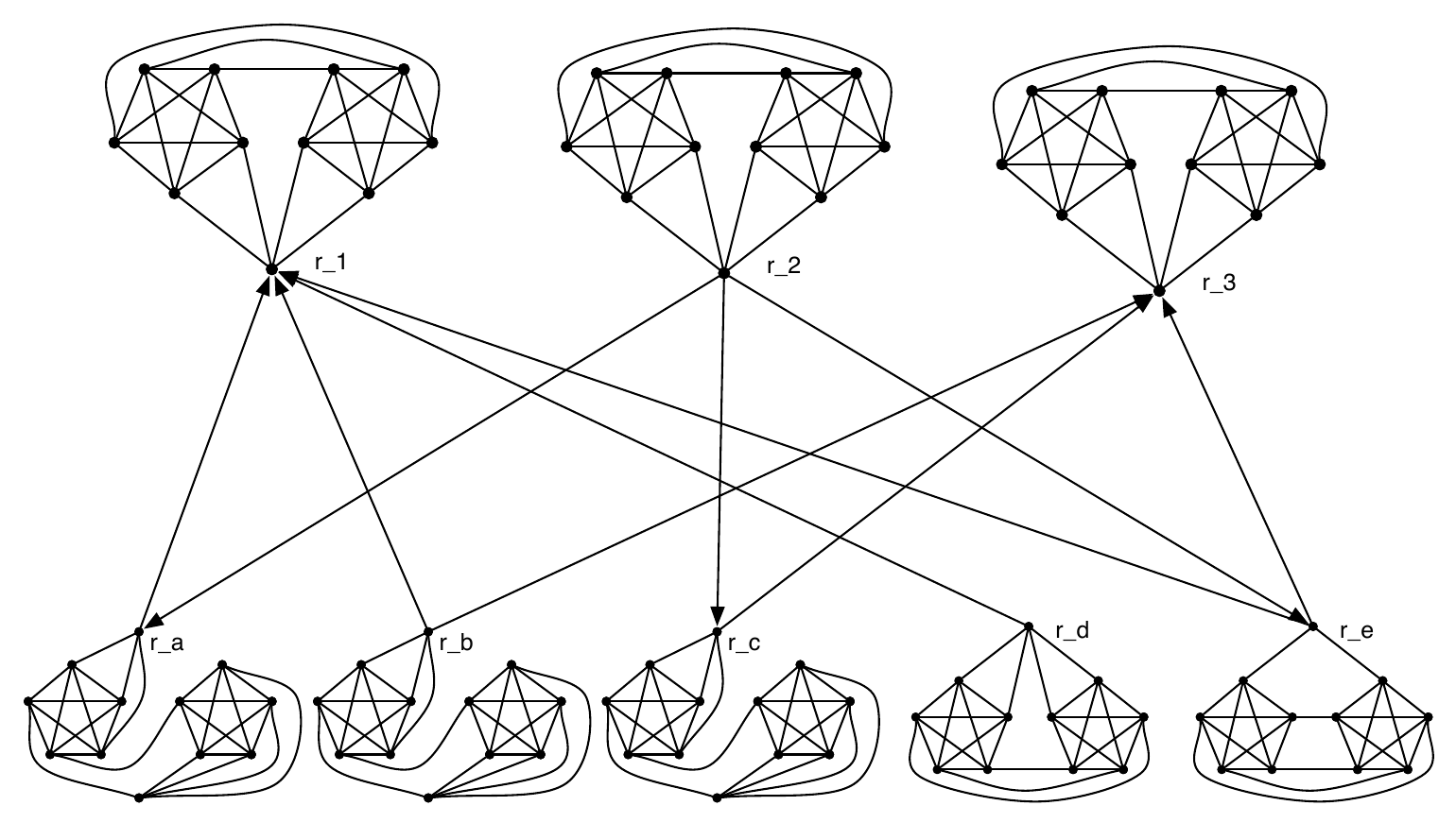}

  \caption[]{The following is an example of the graph corresponding to the Set
Cover instance: $\{a,b,c,d,e\}$ with sets $\{S_1,S_2,S_3\}$,
$S_1=\{a,b,d,e\}$, 
$S_2=\{a,c,e\}$, $S_3=\{b,c,e\}$. Suppose that $S_1$ and $S_3$ are chosen for the
cover, then we see all edges from the element gadgets are directed
toward $S_1$ and $S_3$.}
  \label{fig:gadget_graph_clarified}
  \end{figure}

We construct a graph $G$ as follows (See Figure~\ref{fig:gadget_graph_clarified}):

For each $S_i \in \mathcal{S}$ create a {\em set} $k$-gadget $H_1$, with root vertex
$r_i$. For each element $x$ create an {\em element} $k$-gadget $H_{f_x}$, with root vertex
$r_x$. For every $x \in S_i$ connect $r_i$ to 
$r_x$. $G$ has the following properties:

\begin{enumerate}
\item All vertices in $G$ have degree $k$ except
for the vertices $r_i$ of the set gadgets, these have degree $k+|S_i|-1$.
\item All vertices have degree at
least $k$, so the minimum possible maximum indegree of any acyclic
orientation of $G$ is $k$.
\item $G$ is $k$-strippable.
\end{enumerate}

The first two properties are clear from construction. We will prove the third property with the following two lemmata.

  \begin{lemma}
    \label{set_gadget}
    The vertices of a set gadget in $G$ with any vertex $v$ removed are
    $(k-1)$-strippable.
  \end{lemma}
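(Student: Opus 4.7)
The plan is to exhibit an explicit stripping order for the gadget minus $v$ and verify, case by case on where $v$ lies, that every stripped vertex has at most $k-1$ still-present neighbours at the moment it is stripped. Recall that the set gadget is $H_1$, consisting of two cliques $L, R \cong K_k$, a root $r$, edges from $r$ to sets $L_1 \subseteq L$ and $R_1 \subseteq R$ of size $(k-1)/2$ each, and a perfect matching between $L_2 := L \setminus L_1$ and $R_2 := R \setminus R_1$ (each of size $(k+1)/2$); $r$ has degree $k-1$ while every other vertex has degree $k$ inside the gadget.

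The stripping order I would use, skipping $v$, is: $r$, then the vertices of $L_1$, then $L_2$, then $R_1$, then $R_2$ (this is essentially the order indicated by the shading in Figure~\ref{fig:gadgets1}). For the case $v = r$, when the algorithm reaches $L_1$ each of its vertices has degree $k-1$ because the edge to $r$ is gone, and subsequent $L_1$-vertices are only smaller; once $L_1$ has been stripped, each vertex of $L_2$ has lost all $(k-1)/2$ of its $L_1$-neighbours, so its degree is $k - (k-1)/2 = (k+1)/2 \le k-1$ (this is the binding inequality, valid for $k \ge 3$). Stripping $L_2$ and then $R_1, R_2$ is handled symmetrically, noting that by the time $R_2$ is reached each such vertex has also lost its matching partner in $L_2$, so its degree is at most $(k-1)/2$.

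For $v \in L_1$ the degrees of $r$ and of the remaining $L_1 \cup L_2$ vertices are further reduced (they have also lost $v$), so the same order continues to work. For $v \in L_2$, $r$ still has degree $k-1$, every vertex of $L \setminus \{v\}$ has degree exactly $k-1$ from the start (having lost $v$), and the matched partner of $v$ in $R_2$ begins with degree $k-1$; again the same order succeeds with only smaller indegrees. The cases $v \in R_1 \cup R_2$ are handled by swapping the roles of $L$ and $R$.

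The main obstacle is really just the case-by-case bookkeeping; once the order is fixed, vertex degrees decrease monotonically, and the verification collapses to the single arithmetic inequality $(k+1)/2 \le k-1$. This holds whenever $k \ge 3$, and we do have $k \ge 3$ since $k$ is odd and strictly greater than $\max_{i,x}\{|S_i|, f_x\} \ge 1$.
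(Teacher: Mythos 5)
Your description of the gadget's internal structure and your handling of the two cliques are fine, but there is a genuine gap in where you place the root $r$ in the stripping order. You model $r$ as having degree $k-1$, which is its degree \emph{inside} the gadget; the lemma, however, is about the set gadget as a subgraph of $G$, and in $G$ the root $r_i$ of the set gadget for $S_i$ also carries the $|S_i|$ edges to the element-gadget roots $r_x$ for $x \in S_i$, giving it degree $k + |S_i| - 1$ (or $k + |S_i| - 2$ if $v$ happens to be a gadget neighbour of $r$). Your order for the case $v \ne r$ strips $r$ \emph{first}, at which point its indegree would be $k + |S_i| - 1 \ge k$ for any nonempty $S_i$, violating $(k-1)$-strippability. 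This is not a cosmetic issue: in the reverse direction of the reduction the set gadgets of the chosen cover are stripped while all element gadgets are still intact, so those $|S_i|$ external edges are necessarily present when $r_i$ is stripped.

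The fix is to strip $r$ \emph{last}, which is what the paper does: first strip the clique containing $v$ (every remaining vertex of that clique starts at degree $k-1$ once $v$ is gone, or once an $r$-neighbour or matched vertex is taken first in the $v=r$ case), then the other clique (each of whose $R_2$-vertices has lost its matching partner, and each subsequent vertex has lost a clique neighbour), and only then $r$, which by that point has lost all $k-1$ of its in-gadget neighbours and retains only the $|S_i| \le k-1$ edges to element roots. Your case $v = r$ is correct and agrees with the paper. A small additional remark: your arithmetic check $(k+1)/2 \le k-1$ is unnecessary --- once any single vertex of a $K_k$ has been stripped, every remaining vertex of that clique has degree at most $k-1$, so any order that enters each clique at a vertex of current degree at most $k-1$ suffices.
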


  \begin{proof}
    $r$ is the only vertex in the set gadget that potentially has
    degree greater than $k$. 

    If $v=r$ then the vertices that were adjacent to $r$ have degree
    $k-1$ so we can $(k-1)$-strip the remaining
    vertices.

    Suppose that $v\neq r$. The vertices of the set gadget can be
    $(k-1)$-stripped as illustrated in Figure~\ref{fig:subfigureExample}. 

If $\ell$ is odd, first strip all of the
    vertices of the clique containing $v$ then strip all of the
    vertices of the other clique. We know that $r$ originally had
    degree $k+|S_i|-1$ and we removed $k-1$ of the vertices adjacent
    to it. Thus $r$ has degree $|S_i|$ which is at most $k-1$, so we
    can strip $r$.

    If $\ell$ is even, then either $v\in K_k$ or $v=s$. If $v\in K_k$ for
    either the left or right clique, first strip all of the vertices
    in this clique, then strip $s$, then strip the vertices in the
    other clique and finally strip $r$ as in the odd case. If $v=s$ then we can strip
    both cliques, because both were connected to $s$, and finally
    strip $r$.
  \end{proof}

A similar argument shows:

\begin{lemma}
  \label{element_gadget}
     Any element gadget with any vertex $v$ removed is
    $(k-1)$-strippable.
\end{lemma}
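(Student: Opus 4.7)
The plan is to mimic the proof of Lemma~\ref{set_gadget} almost verbatim, since an element gadget $H_{f_x}$ has the same internal structure as the $k$-gadget used there; the only structural difference between the two lemmas is that the root $r_x$ of the element gadget carries $f_x$ external edges (to set-gadget roots in $G$) instead of $|S_i|$ external edges. So the same stripping orders from the proof of Lemma~\ref{set_gadget}, with the substitution $\ell = f_x$, should work, and the parameter bound $f_x \le k-1$ (coming from $k > \max_{i,x}\{|S_i|, f_x\}$) will play the role that $|S_i| \le k-1$ played there.

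First I would split on whether $v = r_x$ or $v \neq r_x$. If $v = r_x$, then removing $r_x$ also removes its $f_x$ external edges, leaving the $k - f_x$ former gadget-neighbors of $r_x$ with degree $k-1$ and the other non-root vertices still with degree $k$. I would strip the degree-$(k-1)$ vertices first and then continue with the clique-by-clique order used in the $v=r$ case of Lemma~\ref{set_gadget}, each strip yielding indegree at most $k-1$. If $v \neq r_x$, I would apply the stripping order from Lemma~\ref{set_gadget} with $\ell = f_x$, splitting by parity of $f_x$ and by the location of $v$ exactly as in that proof, and leave $r_x$ for last.

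The one step where the external edges actually matter is the final strip of $r_x$ in the $v \neq r_x$ case. By the time every other element-gadget vertex has been stripped, all $k - f_x$ internal edges at $r_x$ are gone, so $r_x$'s remaining degree is precisely $f_x$: its external edges to set-gadget roots, which have not yet been stripped. Since $k > \max_{i,x}\{|S_i|, f_x\}$ gives $f_x \leq k - 1$, the final strip of $r_x$ yields indegree $f_x \leq k-1$, completing $(k-1)$-strippability.

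The only thing to check beyond Lemma~\ref{set_gadget} is that the $f_x$ external edges at $r_x$ do not disturb the intermediate strips inside the element gadget, but that is immediate: only $r_x$ has external edges, so the degree of every other vertex of the gadget during the process is identical to what it would be in isolation, and the order prescribed by Lemma~\ref{set_gadget}'s argument remains valid verbatim. Consequently there is no real obstacle; this lemma is a direct adaptation of Lemma~\ref{set_gadget} with the bound on $|S_i|$ replaced by the analogous bound on $f_x$.
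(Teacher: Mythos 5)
Your proposal is correct and is exactly the argument the paper intends: the paper gives no separate proof for this lemma, saying only that ``a similar argument'' to Lemma~\ref{set_gadget} applies, and your write-up fleshes out precisely that adaptation (reusing the $H_\ell$ stripping orders with $\ell=f_x$, and checking that the final strip of $r_x$ leaves indegree $f_x\le k-1$ by the choice of $k$). No gaps; the parity split and the observation that only the root carries external edges are the right points to verify.
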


It follows that $G$ is $k$-strippable: stripping one vertex of degree $k$ from each
set and element gadget leaves $(k-1)$-strippable subgraphs.

\begin{theorem}
$G$ has an acyclic orientation with at most $\ell$ vertices of
indegree $k$ if and only if there is a covering subcollection of
$\mathcal S$ of size at most $\ell$.
\end{theorem}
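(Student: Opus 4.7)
Plan: The proof has two directions.

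For the forward direction, given a cover $\mathcal{C}$ of size at most $\ell$, I will construct an acyclic orientation of $G$ with at most $\ell$ vertices of indegree $k$, placing one such vertex in each cover-set-gadget and none elsewhere. First, I orient cross edges by cover membership: for every $x \in S_i$, direct $r_x \to r_i$ if $S_i \in \mathcal{C}$ and direct $r_j \to r_x$ if $S_j \notin \mathcal{C}$. Then, inside each gadget I use the strippings from Lemmata~\ref{set_gadget} and~\ref{element_gadget}. For a non-cover set gadget, the $v=r$ case of Lemma~\ref{set_gadget} yields an orientation in which $r_j$ is the gadget-sink with gadget indegree $k-1$ and every non-root has indegree at most $k-1$; together with cross-in $0$, this keeps $r_j$'s total indegree at $k-1$. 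For each element gadget, Lemma~\ref{element_gadget} applied with $v=r_x$ orients the gadget minus $r_x$ with max indegree at most $k-1$, after which I direct all $k-f_x$ edges at $r_x$ into $r_x$; since $\mathcal{C}$ covers $x$, at least one cross edge leaves $r_x$, keeping $r_x$'s total indegree at most $k-1$. For each cover set gadget I pick an arbitrary non-root $u_i$ and apply the $v=u_i$ case of Lemma~\ref{set_gadget}: its ``cliques first, then root'' stripping leaves $r_i$ as the last-stripped vertex and hence the gadget-source with gadget indegree $0$, while directing all $k$ gadget edges at $u_i$ into $u_i$ makes $u_i$ the lone indegree-$k$ vertex in the gadget ($r_i$'s total indegree is $|S_i| \leq k-1$). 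Acyclicity follows from the global topological order ``non-cover set gadgets, then element gadgets, then cover set gadgets'' with the internal orders just described; every cross and gadget edge goes forward.

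For the backward direction, suppose we have an acyclic orientation with at most $\ell$ indegree-$k$ vertices. The structural key is that every non-root gadget vertex has degree exactly $k$, so any non-root that is a gadget-sink (has all gadget edges incoming) must have total indegree exactly $k$. Since every finite DAG has a sink, each gadget contains a gadget-sink, and if the gadget has no indegree-$k$ vertex then this sink must be the root. In an element gadget of $x$ with no indegree-$k$ vertex, $r_x$ is therefore the gadget-sink with gadget indegree $k-f_x$, and the requirement that $r_x$'s total indegree stay below $k$ forces at least one cross edge $r_x \to r_i$ with $x \in S_i$ to leave $r_x$. In a set gadget of $S_i$ with no indegree-$k$ vertex, $r_i$ is the gadget-sink with gadget indegree $k-1$, and $r_i$'s total indegree staying below $k$ forces all cross edges at $r_i$ to be outgoing. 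Let $T$ index the set gadgets containing an indegree-$k$ vertex and $E$ the elements whose element gadget does; then for each $x \notin E$ the outgoing edge $r_x \to r_i$ is incoming at $r_i$, so the contrapositive of the set-gadget observation forces $i \in T$. Hence $\{S_i : i \in T\}$ covers $\{x : x \notin E\}$; adjoining any single set containing each $x \in E$ produces a cover of size at most $|T|+|E|$, and since distinct indices in $T$ and $E$ lie in distinct gadgets each contributing at least one indegree-$k$ vertex, $|T|+|E|$ is bounded by the total number of indegree-$k$ vertices, hence by $\ell$.

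The main obstacle I anticipate is the cover-set-gadget construction in the forward direction: the simultaneous requirements that $r_i$ be a gadget-source (so its gadget indegree is $0$, absorbing the $|S_i|$ incoming cross edges) and that $u_i$ be the single gadget-sink with indegree exactly $k$ rely on the specific ``clique-first, then root'' stripping in the proof of Lemma~\ref{set_gadget}, and stitching the three kinds of gadget orientations together into one globally acyclic orientation while keeping track of every indegree is the most delicate bookkeeping step.
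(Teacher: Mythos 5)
Your proof is correct and follows essentially the same approach as the paper: the cover-to-orientation direction designates one non-root indegree-$k$ vertex per chosen set gadget and assembles the orientation from the strippings of Lemmata~\ref{set_gadget} and~\ref{element_gadget}, while the orientation-to-cover direction uses the observation that a gadget with no indegree-$k$ vertex must have its root as the gadget-sink, forcing cross edges outward toward set gadgets that do contain an indegree-$k$ vertex. Your write-up is somewhat more explicit than the paper's about the cross-edge orientations, the global topological order, and the sink argument, but the reduction and the key steps are the same (note only that you label the two directions oppositely to the paper).
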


\begin{proof}
  For the forward direction: Let $X$ be the set of at
    most $\ell$ vertices which have indegree $k$. $G\backslash X$ is
  $(k-1)$-strippable. Let $\mathcal S^\prime$ be the subcollection of
  $\mathcal S$ containing

  (i) all the sets whose gadgets have an indegree $k$ vertex and

  (ii) for each element gadget that has an indegree $k$ vertex, one set
  that contains this element.

  Notice that $|\mathcal S^\prime| \leq \ell$. We will show that
  $\mathcal S^\prime$ is a covering.

  For any element $x$, let $P$ be the element gadget corresponding to
  $x$. There are two cases for $P$:

  (a) $P$ includes a vertex of indegree $k$.

  In this case $x$ is covered by a set of type (ii). 

 (b) $P$ does not include a vertex of indegree $k$.

 The degree of the root vertex $r_P$ of $P$ is $k$. The
   assumption that $P$ does not have any vertices of indegree $k$
 implies that at least one edge is oriented away from $r_P$. If this
 edge is from $r_P$ to a set gadget, then at least one vertex of the
 set gadget must have indegree $k$. The set corresponding to this set
 gadget must be included in $\mathcal S^\prime$, so $x$ is covered. If
 the edge oriented away from $r_P$ is directed to another vertex in
 $P$, then there must be a vertex of indegree $k$ in $P$. By the
 construction of the gadget and the acyclicity property of the
 orientation, this is a contradiction to the fact that the edge
 oriented away from $r_P$ is directed into another vertex in $P$. Thus
 in any case, element $x$ is covered.

Therefore $\mathcal S^\prime$ is a cover.

For the reverse direction: Let $\mathcal S^\prime$ be the collection
of at most $\ell$ sets from $\mathcal S$ that form a cover. Take a
non-root vertex from each set gadget corresponding to a set in
$\mathcal S^\prime$ and orient all edges toward it. Each of these
gadgets is now $(k-1)$-strippable by Lemma~\ref{set_gadget}. Each
element $x$ is covered, so stripping the set gadget covering $x$
directs the edge between the set gadget and the element gadget for $x$
away from the element gadget (see the orientation of
Figure~\ref{fig:gadget_graph_clarified}). The root of the element
gadget has degree $k-1$, so each element gadget is $(k-1)$-strippable
in $G\backslash X$. Consider the set gadgets for the sets not in
$\mathcal S^\prime$. The roots of these gadgets all have remaining
degree $k-1$, because all of the element gadgets have been stripped so
these are also $k-1$-strippable.  This orientation has at most $\ell$
vertices of indegree $k$.
\end{proof}

\section{Closing}

Graph orientation is a rich problem area.  In this paper we have
presented three variants of the problem with their respective
motivations.  In one variant the resulting graph has no structural
constraints, in another strong connectivity is required, and finally
an acyclic orientation is required.  For the first two variants the
simple algorithm of path reversal proves to be powerful.  We have
shown the optimality of the algorithm for the ``minimizing the maximum
indegree'' objective in both variants and for the minimizing the
lexicographic order objective for
the first variant.  We conjecture that SC-path-reversal is indeed
optimal for the ``minimizing the lexicographic order'' objective as well. The third variant,
requiring the resulting graph to be acyclic introduced quite a
different problem.  We have included an NP-hardness proof for acyclic
minimizing the lexicographic order to demonstrate the point.  How to
approximate the ``minimizing the lexicographic order''
objective and enforce the acyclicity constraint remains an interesting
open problem.

\section{Acknowledgements}

We thank the reviewers for their help with making our proofs more
clear. We would also like to thank our reviewers for pointing us to an
alternate proof for Theorem 6~\cite{fg1994}.

\bibliographystyle{plain}
\bibliography{indegree}

\end{document}